\documentclass[preprint,12pt,nopreprintline]{elsarticle}
\usepackage{amsmath,amssymb,amsfonts}
\usepackage{amsthm}
\usepackage{graphicx}
\usepackage{textcomp}
\usepackage{color}
\usepackage{epstopdf}
\usepackage{xcolor}
\usepackage{multirow}
\usepackage{subcaption}
\usepackage{multicol,lipsum}
\usepackage[utf8x]{inputenc}
\setcounter{tocdepth}{3}
\usepackage{epsfig}
\usepackage{algorithm}
\usepackage{algorithmic}
\usepackage{multirow}
\usepackage{amsmath}
\usepackage{hyperref}
\usepackage{mathtools}

\usepackage{bm}
\usepackage{listings}
\newtheorem{remark}{Remark}
\newtheorem{definition}{Definition}
\newtheorem{theorem}{Theorem}
\newtheorem{proposition}{Proposition}
\newtheorem{lemma}{Lemma}
\newtheorem{example}{Example}
\newtheorem{corollary}{Corollary}
\newtheorem{assumption}{Assumption}
\usepackage{balance}
\usepackage{textcomp}
\usepackage{xcolor}
\newcommand{\oomit}[1]{}

\journal{}
\date{}

\begin{document}

\begin{frontmatter}

%% Title, authors and addresses

%% use the tnoteref command within \title for footnotes;
%% use the tnotetext command for theassociated footnote;
%% use the fnref command within \author or \affiliation for footnotes;
%% use the fntext command for theassociated footnote;
%% use the corref command within \author for corresponding author footnotes;
%% use the cortext command for theassociated footnote;
%% use the ead command for the email address,
%% and the form \ead[url] for the home page:
%% \title{Title\tnoteref{label1}}
%% \tnotetext[label1]{}
%% \author{Name\corref{cor1}\fnref{label2}}
%% \ead{email address}
%% \ead[url]{home page}
%% \fntext[label2]{}
%% \cortext[cor1]{}
%% \affiliation{organization={},
%%             addressline={},
%%             city={},
%%             postcode={},
%%             state={},
%%             country={}}
%% \fntext[label3]{}

\title{A New Framework for Bounding Reachability Probabilities of Continuous-time Stochastic Systems\footnote{To appear in Nonlinear Analysis: Hybrid Systems}}

%% use optional labels to link authors explicitly to addresses:
%% \author[label1,label2]{}
%% \affiliation[label1]{organization={},
%%             addressline={},
%%             city={},
%%             postcode={},
%%             state={},
%%             country={}}
%%
%% \affiliation[label2]{organization={},
%%             addressline={},
%%             city={},
%%             postcode={},
%%             state={},
%%             country={}}

\author{Bai Xue} %% Author name
\ead{xuebai@ios.ac.cn}
%% Author affiliation
\affiliation{organization={Key Laboratory of System Software (Chinese Academy of Sciences) and State Key Lab. of Computer Science, Institute of Software Chinese Academy of Sciences},%Department and Organization
          %  addressline={}, 
            city={Beijing},
            postcode={100190}, 
       %     state={},
            country={China}}

%% Abstract
\begin{abstract}
This manuscript presents an innovative framework for constructing barrier functions to bound reachability probabilities for continuous-time stochastic systems described by stochastic differential equations (SDEs). The reachability probabilities considered in this paper encompass two aspects: the probability of reaching a set of specified states within a predefined finite time horizon, and the probability of reaching a set of specified states at a particular time instant. The barrier functions presented in this manuscript are developed either by relaxing a parabolic partial differential equation that characterizes the exact reachability probability or by applying the Gr\"onwall's inequality. In comparison to the prevailing construction method, which relies on Doob's non-negative supermartingale inequality (or Ville's inequality), the proposed barrier functions provide stronger alternatives, complement existing methods, or fill gaps. 
\end{abstract}

\begin{keyword}
Stochastic Differential Equations\sep Reachability Probabilities
\end{keyword}

\end{frontmatter}

\section{Introduction}
\label{sec:intro}
Stochastic phenomena are commonly observed in both natural and artificial systems, spanning multiple disciplines such as biology and robotics. To accurately model these systems, sophisticated approaches are needed due to their inherent randomness \cite{franzle2008stochastic}. Stochastic differential equations (SDEs) provide a powerful tool by integrating deterministic dynamics with stochastic processes, offering a comprehensive framework for comprehending the behavior of these systems \cite{kloeden1992stochastic}. They have been widely applied, such as in models of disturbances in engineered systems like wind forces \cite{wang2015long} and pedestrian motion \cite{hoogendoorn2004pedestrian}.

The reachability probability is a critical quantitative measure within the context of SDEs \cite{lavaei2022automated}. It provides valuable insights into the likelihood of a system, governed by an SDE, reaching a set of specified states within a predetermined (in)finite time frame (referred to as reachability probability within (in)finite time horizons) or at a particular time instant (referred to as reachability probability at specific time instants). This concept plays a pivotal role in understanding the probabilistic evolution of systems under stochastic influences, enabling informed analysis and decision-making in various fields. Computing reachability probabilities typically involves solving Hamilton-Jacobi-Bellman equations \cite{koutsoukos2008computational,bujorianu2009stochastic,esfahani2016stochastic}. However, obtaining analytical solutions is often infeasible, necessitating the use of numerical approximations. As a result, obtaining both upper and lower bounds of reachability probabilities becomes impractical. \cite{nilsson2020lyapunov} gave comparison results for SDEs that via a Lyapunov-like function allow reachability probabilities within finite time horizons to be upper-bounded by an exit probability of a one-dimensional Ornstein-Uhlenbeck process, but the bounds are not in closed form. Inspired by Lyapunov functions for stability analysis, determining upper and lower bounds of reachability probabilities within infinite and finite time horizons has been simplified by finding barrier certificates.  The primary focus of safety verification studies lies in calculating upper bounds, where the objective is to estimate the maximum likelihood of reaching a specified unsafe set. Conversely, when the goal is to assess the probability of reaching a target set, the emphasis is on obtaining a lower bound of this probability. With the development of polynomial optimization, specifically sum-of-squares polynomial optimization \cite{parrilo2003semidefinite}, barrier certificates have emerged as a powerful tool for certifying upper bounds of reachability probabilities. When the system of interest is polynomial, the problem of finding barrier certificates can be addressed through convex optimization. Barrier certificates for SDEs were initially introduced in \cite{prajna2004stochastic,prajna2007framework} for infinite time safety verification, upper-bounding the probability that a system will eventually reach an unsafe region based on a non-negative barrier function. They build upon the known Doob's non-negative supermartingale inequality \cite{doob1939jean}, which requires the expectation of the non-negative barrier certificate to decrease along the system dynamics. Later, inspired by results in \cite{kushner1967stochastic} and the Doob's non-negative supermartingale inequality, \cite{steinhardt2012finite} extended barrier certificates to safety verification over finite time horizons and proposed c-martingales, which allow the expected value of the barrier function to increase over time. This approach provides upper bounds for the reachability probability of a system entering an unsafe region within finite time horizons. The c-martingales were further enhanced in \cite{santoyo2021barrier} for safety verification over finite time horizons by imposing a state-dependent bound on the expected value of the barrier certificate.  Recently, a controlled version was presented in \cite{wang2021safety}. Meanwhile, \cite{feng2020unbounded} proposed a time-varying barrier function to upper bound the reachability probability within finite time horizons, utilizing Doob's non-negative supermartingale inequality. While there has been considerable research focusing on providing upper bounds for the reachability probability within (in)finite time horizons, the practice of lower-bounding this probability has received considerably less attention. A novel equation, which can characterize the exact reachability probability within the infinite time horizon, was proposed in \cite{xue2021reach,Xue2023}. By relaxing this equation, barrier-like conditions can be obtained to both lower-bound and upper-bound the reachability probability within the infinite time horizon. Recently, this approach has been extended to lower-bound the reachability probability within finite time horizons in \cite{xue2023reachability,xue2023safe}. All of the aforementioned works study the bounding problem of reachability probabilities within (in)finite time horizons. However, to the best of our knowledge, there is no work in the framework of barrier functions investigating the problem of bounding reachability probabilities at specific time instants. This is an important problem because it addresses scenarios where system performance is critical at precise moments, not just within an interval. For instance, in multi-agent systems, agents may need to synchronize their states (e.g., achieve a rendezvous) at an exact time. Similarly, in digital control, the state is typically sampled and controlled at discrete instants, making its properties at those specific times paramount for stability and performance. Providing formal bounds for these exact-time reachability probabilities thus enables verification of stricter temporal specifications and enhances the applicability of barrier certificate methods to time-critical stochastic systems.

This paper explores the issue of lower- and upper-bounding reachability probabilities within finite time horizons and at specific time instants in stochastic systems modeled by SDEs. To tackle these problems, we propose time-dependent and time-independent barrier functions that provide lower and upper bounds for these reachability probabilities. The development of these barrier certificates is influenced by our previous work \cite{Xue2023,xue2023safe}, which introduces an alternative method that does not rely on the commonly used Doob's non-negative supermartingale inequality. Leveraging the occupation measure, the construction of the barrier certificates is achieved through either relaxation of a second-order partial differential equation or utilization of the Gr\"onwall inequality. These barrier certificates are either more powerful compared to those found in previous works, complement the existing ones, or fill a gap. They will facilitate the gain of tight bounds on reachability probabilities within finite time horizons and at specific time instants.

The main contributions of this work are summarized below.
\begin{enumerate}
\item \textbf{Novel Methodological Foundation:} The framework provides an alternative to the prevailing method that relies on Doob’s nonnegative supermartingale inequality (as used in works like [5] and [18]). Instead, it constructs barrier functions by relaxing a parabolic partial differential equation (PDE) that characterizes the exact reachability probability or by applying Grönwall’s inequality.
\item  \textbf{Comprehensive Bounding:} The work provides methods for both upper- and lower-bounding two types of reachability probabilities:
\begin{enumerate}
\item The probability of reaching a set within a finite time horizon.
\item The probability of being in a set at a specific time instant (a problem not previously addressed within the barrier function framework for SDEs).
\end{enumerate}
\item \textbf{Alternative Formulation and Filled Gaps:} A key difference from supermartingale-based approaches is that our conditions do not inherently require the barrier function to be non-negative everywhere for all time. This alternative formulation:
\begin{enumerate}
    \item Complements existing supermartingale-based methods (e.g., \cite{feng2020unbounded}, \cite{santoyo2021barrier}) and can, in practice, lead to the discovery of barrier functions that might not satisfy stricter non-negativity constraints.
    \item Fills a significant gap by providing a practical approach for lower-bounding reachability probabilities, which has received considerably less attention in the literature.
    \item Extends the reachability analysis to the novel case of probabilities at a specific time instant.
\end{enumerate}
\end{enumerate}

This paper is structured as follows. In Section \ref{sec:pre}, we introduce SDEs and the problems related to bounding reachability probabilities within finite time horizons and at specific time instants. In Section \ref{sec:bup}, we present our time-dependent and time-independent barrier functions for lower- and upper-bounding reachability probabilities within finite time horizons. Then, in Section \ref{sec:bupII}, we present our barrier functions for lower- and upper-bounding 
 reachability probabilities at specific time instants. Finally, in Section \ref{sec:con}, we conclude the paper.

Some basic notions are used throughout this paper: $\mathbb{R}$ and $\mathbb{R}_{\geq 0}$ stand for the set of real numbers and non-negative real numbers, respectively; 
$\mathbb{R}^{n}$ and $\mathbb{R}^{n\times m}$ denote the space of all $n$-dimensional vectors and $n\times m$ real matrices, respectively; for a set $\mathcal{A}$, the sets $\mathcal{A}^{\circ}$, $\overline{\mathcal{A}}$, and $\partial \mathcal{A}$ denote the interior, the closure, and the boundary of the set $\mathcal{A}$, respectively; $\wedge$ denotes the logical operation of conjunction. 
\section{Preliminaries}
\label{sec:pre}
This section introduces SDEs and the reachability probabilities bounding problem of interest. 

Consider the continuous-time stochastic system, 
\begin{equation}
    \label{SDE}
    \begin{split}
d\bm{x}(t,\bm{w})=\bm{b}(\bm{x}(t,\bm{w}))dt+\bm{\sigma}(\bm{x}(t,\bm{w})) d\bm{W}(t,\bm{w}),
\end{split}
\end{equation}
where $\bm{b}(\,\cdot\,)\colon \mathbb{R}^n\rightarrow \mathbb{R}^n$ and $\bm{\sigma}(\,\cdot\,)\colon \mathbb{R}^n \rightarrow \mathbb{R}^{n\times k}$ are locally Lipschitz continuous function; $\bm{W}(\cdot,\cdot)\colon \mathbb{R}\times \Omega\rightarrow \mathbb{R}^k$ is a standard $k$-dimensional Wiener process, and $\Omega$, equipped with the probability measure $\mathbb{P}$,  is the sample space $\bm{w}$ belongs to.  The expectation with respect to $\mathbb{P}$ is denoted by $\mathbb{E}[\,\cdot\,]$.

Given an initial state $\bm{x}_0$, the SDE \eqref{SDE} has a unique (maximal local) strong solution over a time interval $[0,T^{\bm{x}_0}(\bm{w}))$, where $T^{\bm{x}_0}(\bm{w})$ is a positive real value or infinity. This solution is denoted as $\bm{X}_{\bm{x}_0}^{\bm{w}}(\,\cdot\,)\colon [0,T^{\bm{x}_0}(\bm{w}))\rightarrow \mathbb{R}^n$, which satisfies the stochastic integral equation,
\begin{equation*}
\begin{split}
  \bm{X}_{\bm{x}_0}^{\bm{w}}(t)=\bm{x}_0+\int_{0}^t \bm{b}(\bm{X}_{\bm{x}_0}^{\bm{w}}(\tau))d \tau+\int_{0}^t \bm{\sigma}(\bm{X}_{\bm{x}_0}^{\bm{w}}(\tau)) d\bm{W}(\tau,\bm{w}).
   \end{split}
\end{equation*}

Given a  function  $v(t,\bm{x})$ that is twice continuously differentiable over $\bm{x}$ and continuously differentiable over $t$, the infinitesimal generator underlying system \eqref{SDE} on the function $v(t,\bm{x})$, which represents the limit of the expected value of $v(t,\bm{X}_{\bm{x}_0}^{\bm{w}}(t))$ as $t$ approaches 0, is presented in Definition \ref{inf_generator}.
\begin{definition}[\cite{oksendal2013stochastic}]
\label{inf_generator}
Given system \eqref{SDE},  the infinitesimal generator is the operator $\mathcal{L}$, which is defined to act on suitable functions $v(\cdot,\cdot): \mathbb{R}\times \mathbb{R}^n\rightarrow \mathbb{R}$ by
\begin{equation}
\label{infi}
    \begin{split}
        \mathcal{L}v(t,\bm{x})=\lim_{\Delta t\rightarrow 0}\frac{\mathbb{E}[v(t+\Delta t,\bm{X}_{\bm{x}}^{\bm{w}}(t+\Delta t))]-v(t,\bm{x})}{\Delta t}.
    \end{split}
\end{equation}
The domain of $\mathcal{L}$ is by definition the set of functions $v$ for which the limit \eqref{infi} exists for all $\bm{x}\in \mathbb{R}^n$ and $t\in \mathbb{R}$. 
\end{definition}

The following proposition presents the infinitesimal generator $\mathcal{L}$. 

\begin{proposition}[\cite{oksendal2013stochastic}]
Given system \eqref{SDE},  the infinitesimal generator $\mathcal{L}$ on a test function $v(t,\bm{x})$ is 
\begin{equation*}
\begin{split}
\mathcal{L}v(t,\bm{x})=\frac{\partial v(t,\bm{x}) }{\partial t}+\frac{\partial v(t,\bm{x})}{\partial \bm{x}}\bm{b}(\bm{x})+\frac{1}{2}\textbf{\rm tr}(\bm{\sigma}(\bm{x})^{\top}\frac{\partial^2 v(t,\bm{x})}{\partial \bm{x}^2} \bm{\sigma}(\bm{x})),
\end{split}
\end{equation*}
where $\frac{\partial v}{\partial t}$ and $\frac{\partial v}{\partial \bm{x}}$ represent the gradient of the test function $v(t,\bm{x})$ with respect to $t$ and $\bm{x}$, respectively, $\frac{\partial^2 v}{\partial \bm{x}^2}$ represents the second-order partial derivative of the test function $v(t,\bm{x})$ with respect to $\bm{x}$, and $\textbf{\rm tr}(\,\cdot\,)$ denotes the trace of a matrix.

The domain of $\mathcal{L}$ contains the set of functions $v$ of compact support, which are twice continuously differentiable over $\bm{x}$ and continuously differentiable over $t$.  
\end{proposition}

Given a state constrained set $\mathcal{X}\subseteq \mathbb{R}^n$ that is open and bounded, and a subset $\mathcal{X}_s\subseteq \mathcal{X}$ that is closed, the reachability probability within a time horizon $[0,T]$ with $0<T<\infty$ is the probability of  system \eqref{SDE}, starting from an initial state $\bm{x}_0\in \mathcal{X}\setminus \mathcal{X}_s$, reaches the set $\mathcal{X}_s$ within the time horizon $[0,T]$ while remaining within the state constrained set $\mathcal{X}$ until the first occurrence of hitting the set $\mathcal{X}_s$. It is formulated in Definition \ref{un_pro}.

 \begin{definition}[Reachability Probability I]
\label{un_pro}
Given a time horizon $[0,T]$ with $0<T<\infty$ and an initial state $\bm{x}_0\in \mathcal{X}\setminus \mathcal{X}_s$, the reachability probability $\mathbb{P}_{\bm{x}_0}^{[0,T]}$ within the time horizon $[0,T]$ for system \eqref{SDE} starting from an initial state $\bm{x}_0\in \mathcal{X}\setminus \mathcal{X}_s$ is defined below: 
\begin{equation*}
\mathbb{P}_{\bm{x}_0}^{[0,T]}: =\mathbb{P}\left(\left\{\bm{w}\in \Omega \,\middle|\,
\begin{aligned}
\exists t\in [0,T]. \bm{X}_{\bm{x}_0}^{\bm{w}}(t) \in \mathcal{X}_s\wedge \forall \tau \in [0,t). \bm{X}_{\bm{x}_0}^{\bm{w}}(\tau) \in \mathcal{X}
\end{aligned}
\right\}
\right).
\end{equation*}
\end{definition}

For continuous-time systems with continuous sample paths and for any initial state $\boldsymbol{x}_0 \in \mathcal{X} \setminus \mathcal{X}_s$, the event formulation in Definition~2,
\[
\exists t \in [0,T]: \boldsymbol{X}_{\boldsymbol{x}_0}^{\boldsymbol{w}}(t) \in \mathcal{X}_s \;\wedge\; \forall \tau \in [0,t): \boldsymbol{X}_{\boldsymbol{x}_0}^{\boldsymbol{w}}(\tau) \in \mathcal{X},
\]
is equivalent to the formulation $
\exists t \in [0,T]: \boldsymbol{X}_{\boldsymbol{x}_0}^{\boldsymbol{w}}(t) \in \mathcal{X}_s \;\wedge\; \forall \tau \in [0,t): \boldsymbol{X}_{\boldsymbol{x}_0}^{\boldsymbol{w}}(\tau) \in \mathcal{X} \setminus \mathcal{X}_s$. This equivalence holds because path continuity ensures the existence of a first hitting time $t$ for the set $\mathcal{X}_s$. Consequently, the condition $\forall \tau \in [0, t)\, \boldsymbol{X}_{\boldsymbol{x}_0}^{\boldsymbol{w}}(\tau) \in \mathcal{X}$ implicitly requires the trajectory to remain in $\mathcal{X} \setminus \mathcal{X}_s$ for all times before $t$.

Given a state constrained set $\mathcal{X}\subseteq \mathbb{R}^n$ that is open and bounded, and a subset $\mathcal{X}_s\subseteq \mathcal{X}$ that is closed, the reachability probability at a time instant $T$ with $0<T<\infty$ is the probability of  system \eqref{SDE}, starting from an initial state $\bm{x}_0\in \mathcal{X}\setminus \mathcal{X}_s$, reaches the set $\mathcal{X}_s$ at the time instant $T$ while remaining within the state constrained set $\mathcal{X}$ before the time $T$. It is formulated in Definition \ref{un_proII}.

 \begin{definition}[Reachability Probability II]
\label{un_proII}
Given a time horizon $[0,T]$ with $0<T<\infty$ and an initial state $\bm{x}_0\in \mathcal{X}\setminus \mathcal{X}_s$, the reachability probability $\mathbb{P}_{\bm{x}_0}^{T}$ at the time instant $T>0$ for  system \eqref{SDE} starting from an initial state $\bm{x}_0\in \mathcal{X}\setminus \mathcal{X}_s$ is defined below: 
\begin{equation*}
\mathbb{P}_{\bm{x}_0}^T: =\mathbb{P}\big(\left\{\bm{w}\in \Omega \,\middle|\,
\begin{aligned}
\bm{X}_{\bm{x}_0}^{\bm{w}}(T) \in \mathcal{X}_s\wedge\forall \tau \in [0,T). \bm{X}_{\bm{x}_0}^{\bm{w}}(\tau) \in \mathcal{X}
\end{aligned}
\right\}
\big).
\end{equation*}
\end{definition}

In this paper, we address the challenge of computing the exact reachability probabilities in Definition \ref{un_pro} and \ref{un_proII}, which is often infeasible for nonlinear systems. Instead, we resort to characterizing their lower and upper bounds, i.e., we will characterize $\delta_{i,1} \in [0,1]$ and $\delta_{i,2} \in [0,1]$, $i=1,2$, such that 
\[\delta_{1,1}\leq \mathbb{P}_{\bm{x}_0}^{[0,T]} \leq \delta_{1,2}\]
and 
\[\delta_{2,1}\leq \mathbb{P}_{\bm{x}_0}^{T} \leq \delta_{2,2}.\]

\section{Bounding Reachability Probabilities I}
\label{sec:bup}
This section introduces our barrier functions for upper- and lower-bounding the reachability probability $\mathbb{P}_{\bm{x}_0}^{[0,T]}$ in Definition \ref{un_pro}.  They are respectively formulated in Subsection \ref{subsec:uup} and \ref{subsec:lup}.

The construction of these barrier functions lies on an auxiliary stochastic process $\{\widehat{\bm{X}}_{\bm{x}_0}^{\bm{w}}(t), t\in \mathbb{R}_{\geq 0}\}$ for $\bm{x}_0\in \overline{\mathcal{X}\setminus \mathcal{X}_s}$ that is a stopped process corresponding to $\{\bm{X}_{\bm{x}_0}^{\bm{w}}(t), t\in [0,T^{\bm{x}_0}(\bm{w}))\}$ and the set $\overline{\mathcal{X}\setminus \mathcal{X}_s}$, i.e., 
\begin{equation}
\widehat{\bm{X}}_{\bm{x}_0}^{\bm{w}}(t)=
\begin{cases}
\bm{X}_{\bm{x}_0}^{\bm{w}}(t), & \text{if $t<\tau^{\bm{x}_0}(\bm{w})$},\\
\bm{X}_{\bm{x}_0}^{\bm{w}}(\tau^{\bm{x}_0}(\bm{w})),  & \text{if $t\geq \tau^{\bm{x}_0}(\bm{w})$},
\end{cases}
\end{equation}
where $\tau^{\bm{x}_0}(\bm{w})=\inf\{\,t\mid \bm{X}_{\bm{x}_0}^{\bm{w}}(t) \in \partial \mathcal{X}\cup \partial \mathcal{X}_s\,\}$ is the first time of exit of $\bm{X}_{\bm{x}_0}^{\bm{w}}(t)$ from the set $\mathcal{X}\setminus \mathcal{X}_s$. It is worth remarking that the first time of exit $\tau^{\boldsymbol{x}_0}(\boldsymbol{w})$ is well-defined and $\tau^{\boldsymbol{x}_0}(\boldsymbol{w}) \leq T^{\boldsymbol{x}_0}(\boldsymbol{w})$. This holds because the state constrained set $\mathcal{X}$ is bounded. Consequently, if the path $\boldsymbol{X}_{\boldsymbol{x}_0}^{\boldsymbol{w}}(t)$ has a finite explosion time $T^{\boldsymbol{x}_0}(\boldsymbol{w}) < \infty$ (i.e., it escapes to infinity), it must first exit the bounded domain $\mathcal{X} \setminus \mathcal{X}_s$. Exiting this domain necessitates touching its boundary, $\partial\mathcal{X} \cup \partial\mathcal{X}_s$, at a time prior to the explosion.\oomit{It is worth remarking here that if the path $\bm{X}_{\bm{x}_0}^{\bm{w}}(t)$ escapes to infinity in finite time, it must touch $\partial \mathcal{X}\cup \partial \mathcal{X}_s$ and thus $\tau^{\bm{x}_0}(\bm{w})\leq T^{\bm{x}_0}(\bm{w})$.}The stopped process $\widehat{\bm{X}}_{\bm{x}_0}^{\bm{w}}(t)$ inherits the right continuity and strong Markovian property of $\bm{X}_{\bm{x}_0}^{\bm{w}}(t)$. Moreover, the infinitesimal generator $\widehat{\mathcal{L}}$ on a test function $v(t,\bm{x})$ corresponding to $\widehat{\bm{X}}_{\bm{x}_0}^{\bm{w}}(t)$ is identical to the one corresponding to $\bm{X}_{\bm{x}_0}^{\bm{w}}(t)$ over $\mathcal{X}\setminus \mathcal{X}_s$, and is equal to $\frac{\partial v(t,\bm{x})}{\partial t}$ on $\partial \mathcal{X}\cup \partial \mathcal{X}_s$ \cite{kushner1967stochastic}. That is, for $v(t,\bm{x})$ that is twice continuously differentiable over $\bm{x}$ and continuously differentiable over $t$, 
\[
\begin{split}
\widehat{\mathcal{L}}v(t,\bm{x})=\mathcal{L}v(t,\bm{x})=&\frac{\partial v(t,\bm{x})}{\partial t}+\frac{\partial v(t,\bm{x})}{\partial \bm{x}}\bm{b}(\bm{x})+\frac{1}{2}\textbf{\rm tr}(\bm{\sigma}(\bm{x})^{\top}\frac{\partial^2 v(t,\bm{x})}{\partial \bm{x}^2} \bm{\sigma}(\bm{x}))
\end{split}
\] for $(\bm{x},t)\in \mathcal{X}\setminus \mathcal{X}_s \times [0,T]$ and \[\widehat{\mathcal{L}}v(t,\bm{x})=\frac{\partial v(t,\bm{x})}{\partial t}\] for $\bm{x}\in \partial \mathcal{X}\cup \partial \mathcal{X}_s$ and $t\in [0,T]$.

Given an initial state $\bm{x}_0\in \mathcal{X}\setminus \mathcal{X}_s$, the exact reachability probability $\mathbb{P}_{\bm{x}_0}^{[0,T]}$ is equal to the probability of reaching the set $\partial \mathcal{X}_s$ at the time instant $T$ for the above auxiliary stochastic process. Before justifying this statement, we first show that the reachability probability $\mathbb{P}_{\bm{x}_0}^{[0,T]}$ is equal to the probability that system \eqref{SDE} will reach the boundary $\partial \mathcal{X}_s$ of the set $\mathcal{X}_s$ within the time horizon $[0,T]$ while remaining within the state constrained set $\mathcal{X}$ until the first occurrence of hitting the set $\partial \mathcal{X}_s$.  

\begin{lemma}
Given $\bm{x}_0\in \mathcal{X}\setminus \mathcal{X}_s$, 
\begin{equation*}
\mathbb{P}_{\bm{x}_0}^{[0,T]}=\mathbb{P}\Bigg(\left\{\bm{w}\in \Omega \,\middle|\,
\begin{aligned}
&\exists t\in [0,T]. \bm{X}_{\bm{x}_0}^{\bm{w}}(t) \in \partial \mathcal{X}_s\wedge\\
&\forall \tau \in [0,t). \bm{X}_{\bm{x}_0}^{\bm{w}}(\tau) \in \mathcal{X}\setminus \mathcal{X}_s
\end{aligned}
\right\}
\Bigg).
\end{equation*}
\end{lemma}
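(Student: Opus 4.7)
The plan is to show that the two events (the one in Definition \ref{un_pro} and the one in the lemma) coincide as subsets of $\Omega$, not merely in probability. Write $A$ for the event defining $\mathbb{P}_{\bm{x}_0}^{[0,T]}$ and $B$ for the event on the right-hand side of the lemma. The key tool will be continuity of sample paths together with the closedness of $\mathcal{X}_s$, which forces the first hit of $\mathcal{X}_s$ to land on $\partial \mathcal{X}_s$ whenever the trajectory starts strictly outside $\mathcal{X}_s$.

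For the inclusion $B \subseteq A$, I would simply observe that $\mathcal{X}_s$ is closed, hence $\partial \mathcal{X}_s \subseteq \mathcal{X}_s$, and that $\mathcal{X}\setminus \mathcal{X}_s \subseteq \mathcal{X}$. Any sample point satisfying the two conditions in $B$ therefore satisfies those in $A$ with the same witness time $t$.

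The interesting direction is $A \subseteq B$. Fix $\bm{w}\in A$ with witness time $t\in[0,T]$, and define the first hitting time
\[
t^{\ast}(\bm{w}) \;=\; \inf\bigl\{\,s\in[0,T]\;:\;\bm{X}_{\bm{x}_0}^{\bm{w}}(s)\in \mathcal{X}_s\,\bigr\}.
\]
Because $\mathcal{X}_s$ is closed and $s\mapsto \bm{X}_{\bm{x}_0}^{\bm{w}}(s)$ is continuous on $[0,t]\subseteq[0,T^{\bm{x}_0}(\bm{w}))$ (the path stays in $\mathcal{X}$ on $[0,t)$ and reaches $\mathcal{X}_s\subseteq \mathcal{X}$ at time $t$, so it is well-defined and continuous up to $t$), the level set $\{s:\bm{X}_{\bm{x}_0}^{\bm{w}}(s)\in\mathcal{X}_s\}\cap[0,T]$ is closed, so the infimum is attained and $\bm{X}_{\bm{x}_0}^{\bm{w}}(t^{\ast})\in\mathcal{X}_s$. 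Since $\bm{x}_0\in\mathcal{X}\setminus\mathcal{X}_s$, we have $t^{\ast}>0$, and for every $\tau\in[0,t^{\ast})$ the point $\bm{X}_{\bm{x}_0}^{\bm{w}}(\tau)$ lies in $\mathcal{X}$ (from the defining property of $A$, since $t^{\ast}\leq t$) and outside $\mathcal{X}_s$ (by definition of $t^{\ast}$), hence in $\mathcal{X}\setminus\mathcal{X}_s$. Taking the limit $\tau\uparrow t^{\ast}$ and using continuity of the path, $\bm{X}_{\bm{x}_0}^{\bm{w}}(t^{\ast})\in\overline{\mathcal{X}_s^{c}}$; combined with $\bm{X}_{\bm{x}_0}^{\bm{w}}(t^{\ast})\in\mathcal{X}_s$ and closedness of $\mathcal{X}_s$ we obtain $\bm{X}_{\bm{x}_0}^{\bm{w}}(t^{\ast})\in \mathcal{X}_s\cap\overline{\mathcal{X}_s^{c}}=\partial\mathcal{X}_s$. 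So $t^{\ast}$ witnesses $\bm{w}\in B$.

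The only potentially delicate point is justifying that the infimum in the definition of $t^{\ast}$ is attained and that the limit argument at $t^{\ast}$ makes sense; both reduce to continuity of the sample path on the compact interval $[0,t]$, which is guaranteed because the strong solution exists up to $T^{\bm{x}_0}(\bm{w})$ and, on the event $A$, the path reaches $\mathcal{X}_s\subseteq\mathcal{X}$ at time $t<T^{\bm{x}_0}(\bm{w})$. Everything else is a set-theoretic rewrite. Combining the two inclusions gives $A=B$, from which the identity of the probabilities follows immediately.
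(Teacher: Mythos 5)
Your proof is correct and follows the same idea as the paper's: since $\mathcal{X}_s$ is closed and sample paths are continuous, the first entry into $\mathcal{X}_s$ from a starting point in $\mathcal{X}\setminus\mathcal{X}_s$ must occur on $\partial\mathcal{X}_s$, so the two events coincide as subsets of $\Omega$. The paper states this as a one-line set identity, whereas you supply the first-hitting-time argument in full (attainment of the infimum, $t^{\ast}>0$, and the limit $\tau\uparrow t^{\ast}$ placing $\bm{X}_{\bm{x}_0}^{\bm{w}}(t^{\ast})$ in $\mathcal{X}_s\cap\overline{\mathcal{X}_s^{c}}=\partial\mathcal{X}_s$), which in particular justifies the refinement from ``stays in $\mathcal{X}$'' to ``stays in $\mathcal{X}\setminus\mathcal{X}_s$'' that the paper's displayed sets leave implicit.
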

\begin{proof}
Since $\mathcal{X}_s$ is a closed set and the sample path $\bm{X}_{\bm{x}_0}^{\bm{w}}(t)$ is continuous over $t$, we have $A=B$, where $A=\{\,\bm{w}\in \Omega \mid \exists t\in [0,T]. \bm{X}_{\bm{x}_0}^{\bm{w}}(t) \in \partial \mathcal{X}_s\bigwedge \forall \tau \in [0,t). \bm{X}_{\bm{x}_0}^{\bm{w}}(\tau) \in \mathcal{X}\,\}$ and $B=\{\,\bm{w}\in \Omega \mid \exists t\in [0,T]. \bm{X}_{\bm{x}_0}^{\bm{w}}(t) \in \mathcal{X}_s\bigwedge \forall \tau \in [0,t). \bm{X}_{\bm{x}_0}^{\bm{w}}(\tau) \in \mathcal{X}\,\}$. Thus, we have the conclusion. 
\end{proof}

\begin{lemma}[Lemma 1, \cite{xue2023safe}]
\label{equiv}
    Given $\bm{x}_0\in \mathcal{X}\setminus \mathcal{X}_s$, \[
    \begin{split}
   & \mathbb{P}_{\bm{x}_0}^{[0,T]}=\mathbb{P}( \widehat{\bm{X}}_{\bm{x}_0}^{\bm{w}}(T)\in \partial \mathcal{X}_s)=\mathbb{E}[1_{\partial \mathcal{X}_s}(\widehat{\bm{X}}_{\bm{x}_0}^{\bm{w}}(T))].
 \end{split}
 \] Moreover, for any $0<T_1\leq T_2\leq T$, \[\mathbb{P}(\widehat{\bm{X}}_{\bm{x}_0}^{\bm{w}}(T_1)\in \partial \mathcal{X}_s)\leq \mathbb{P}(\widehat{\bm{X}}_{\bm{x}_0}^{\bm{w}}(T_2)\in \partial \mathcal{X}_s).\] 
\end{lemma}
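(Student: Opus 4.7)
My plan is to prove both identities by showing that the underlying events in $\Omega$ coincide as sets, and then to extract the monotonicity from the absorbing nature of the stopped process. Using the preceding (unnumbered) lemma, $\mathbb{P}_{\bm{x}_0}^{[0,T]}$ is already rewritten in terms of hitting $\partial\mathcal{X}_s$ rather than $\mathcal{X}_s$, so the first identity reduces to proving $A=B$, where
\[
A := \{\bm{w}\in\Omega : \exists t\in[0,T].\, \bm{X}_{\bm{x}_0}^{\bm{w}}(t)\in\partial\mathcal{X}_s \wedge \forall \tau\in[0,t).\, \bm{X}_{\bm{x}_0}^{\bm{w}}(\tau)\in\mathcal{X}\setminus\mathcal{X}_s\}
\]
and $B := \{\bm{w}\in\Omega : \widehat{\bm{X}}_{\bm{x}_0}^{\bm{w}}(T)\in\partial\mathcal{X}_s\}$. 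The second identity, $\mathbb{P}(B)=\mathbb{E}[1_{\partial\mathcal{X}_s}(\widehat{\bm{X}}_{\bm{x}_0}^{\bm{w}}(T))]$, is just the expression of a probability as the expectation of its indicator.

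For $A\subseteq B$, I would argue that on $\bm{w}\in A$ with witness $t$, the second conjunct forces $\bm{X}_{\bm{x}_0}^{\bm{w}}(\tau)\notin\partial\mathcal{X}\cup\partial\mathcal{X}_s$ for $\tau<t$ while $\bm{X}_{\bm{x}_0}^{\bm{w}}(t)\in\partial\mathcal{X}_s$, so that $\tau^{\bm{x}_0}(\bm{w})=t\le T$; by the definition of the stopped process, $\widehat{\bm{X}}_{\bm{x}_0}^{\bm{w}}(T)=\bm{X}_{\bm{x}_0}^{\bm{w}}(\tau^{\bm{x}_0}(\bm{w}))\in\partial\mathcal{X}_s$. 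For $B\subseteq A$, I would first exclude $\tau^{\bm{x}_0}(\bm{w})>T$, for then $\widehat{\bm{X}}_{\bm{x}_0}^{\bm{w}}(T)=\bm{X}_{\bm{x}_0}^{\bm{w}}(T)$ would lie off $\partial\mathcal{X}\cup\partial\mathcal{X}_s$ by minimality of the exit time, contradicting membership in $\partial\mathcal{X}_s$. Taking $t=\tau^{\bm{x}_0}(\bm{w})\le T$ as the witness then works: on $[0,t)$ the path avoids $\partial\mathcal{X}\cup\partial\mathcal{X}_s$, and by continuity of $\bm{X}_{\bm{x}_0}^{\bm{w}}$ starting from $\bm{x}_0\in\mathcal{X}\setminus\mathcal{X}_s$ it must remain in the same open component $\mathcal{X}\setminus\mathcal{X}_s$ throughout $[0,t)$.

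The monotonicity inequality will then follow from the absorbing behavior of $\widehat{\bm{X}}$: if $\widehat{\bm{X}}_{\bm{x}_0}^{\bm{w}}(T_1)\in\partial\mathcal{X}_s$, the argument above gives $\tau^{\bm{x}_0}(\bm{w})\le T_1\le T_2$, hence $\widehat{\bm{X}}_{\bm{x}_0}^{\bm{w}}(T_2)=\bm{X}_{\bm{x}_0}^{\bm{w}}(\tau^{\bm{x}_0}(\bm{w}))=\widehat{\bm{X}}_{\bm{x}_0}^{\bm{w}}(T_1)\in\partial\mathcal{X}_s$. This yields the set inclusion $\{\widehat{\bm{X}}_{\bm{x}_0}^{\bm{w}}(T_1)\in\partial\mathcal{X}_s\}\subseteq\{\widehat{\bm{X}}_{\bm{x}_0}^{\bm{w}}(T_2)\in\partial\mathcal{X}_s\}$, and monotonicity of $\mathbb{P}$ concludes.

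The delicate point I expect to dwell on is the topological bookkeeping needed to identify $A$ and $B$: checking that "the path stays in the open set $\mathcal{X}\setminus\mathcal{X}_s$ on $[0,t)$ and hits $\partial\mathcal{X}_s$ at $t$" coincides with "$\tau^{\bm{x}_0}(\bm{w})=t$ and $\bm{X}_{\bm{x}_0}^{\bm{w}}(\tau^{\bm{x}_0}(\bm{w}))\in\partial\mathcal{X}_s$" rests on $\partial(\mathcal{X}\setminus\mathcal{X}_s)\subseteq\partial\mathcal{X}\cup\partial\mathcal{X}_s$, continuity of sample paths on $[0,T^{\bm{x}_0}(\bm{w}))$, and the remark already recorded in the text that $\tau^{\bm{x}_0}(\bm{w})\le T^{\bm{x}_0}(\bm{w})$, which prevents one from having to evaluate the path past an explosion before it has already hit a boundary.
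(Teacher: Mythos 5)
Your proposal is correct. Note that the paper itself offers no proof of this lemma --- it is imported verbatim as Lemma~1 of the cited reference --- so there is nothing to compare against; your argument via the pathwise identity of the two events (using that $\mathcal{X}\setminus\mathcal{X}_s$ is open, that $\partial(\mathcal{X}\setminus\mathcal{X}_s)\subseteq\partial\mathcal{X}\cup\partial\mathcal{X}_s$, and continuity of sample paths up to the exit time) together with the absorbing property of the stopped process for the monotonicity claim is exactly the standard route, and the topological bookkeeping you flag is indeed the only delicate point.
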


Further, the exact reachability probability $\mathbb{P}_{\bm{x}_0}^{[0,T]}$ can be reduced to a solution to a second-order partial differential equation. 
\begin{proposition}
\label{eq}
    Suppose there exists a function $v(t,\bm{x})\colon [0,T]\times \mathbb{R}^n$ which is twice continuously differentiable over $\bm{x}$ and continuously differentiable over $t$, satisfying 
    \begin{equation}
    \label{eq_pro}
        \begin{cases}
            \widehat{\mathcal{L}}v(t,\bm{x})=0, & \forall \bm{x}\in \overline{\mathcal{X}\setminus \mathcal{X}_s},\forall t \in [0,T],\\
            v(T,\bm{x})=1_{\partial \mathcal{X}_s}(\bm{x}), & \forall \bm{x}\in \overline{\mathcal{X}\setminus \mathcal{X}_s},
        \end{cases}
    \end{equation}
then
$\mathbb{P}_{\bm{x}_0}^{[0,T]}=v(0,\bm{x}_0)$ for $\bm{x}_0\in \mathcal{X}\setminus \mathcal{X}_s$.    
\end{proposition}
\begin{proof}
The proof relies on  Dynkin’s formula and Lemma \ref{equiv}: for $\bm{x}_0\in \mathcal{X}\setminus \mathcal{X}_s$, we obtain
\[
    \begin{split}
        \mathbb{P}_{\bm{x}_0}^{[0,T]}&=\mathbb{P}(\widehat{\bm{X}}_{\bm{x}_0}^{\bm{w}}(T)\in \partial \mathcal{X}_s)=\mathbb{E}[1_{\partial \mathcal{X}_s}(\widehat{\bm{X}}_{\bm{x}_0}^{\bm{w}}(T))]
        =\mathbb{E}[v(T,\widehat{\bm{X}}_{\bm{x}_0}^{\bm{w}}(T))]\\
        &=v(0,\bm{x}_0)+\mathbb{E}[\int_{0}^T \widehat{\mathcal{L}}v(t,\widehat{\bm{X}}_{\bm{x}_0}^{\bm{w}}(t))dt]=v(0,\bm{x}_0).
    \end{split}
\]
    The proof is completed. 
\end{proof}

\subsection{Upper-bounding Reachability Probabilities}
\label{subsec:uup}
In this subsection, we present our barrier functions for upper-bounding the reachability probability, denoted as $\mathbb{P}_{\bm{x}_0}^{[0,T]}$ in Definition \ref{un_pro}. The first time-dependent barrier function is obtained via relaxing equation \eqref{eq_pro} as stated in Proposition \ref{eq}. The second one extends upon the first one, which relaxes the supermartingale requirement. The third one is a variant of the second one, using a time-independent function $v(\bm{x})$ instead of a time-dependent function $v(t,\bm{x})$. They are respectively formulated in Corollary \ref{coro1}, Theorem  \ref{coro2}, and Corollary \ref{coro3}.

\begin{corollary}
\label{coro1}
    Suppose there exists a barrier function $v(t,\bm{x})\colon [0,T]\times \mathbb{R}^n\rightarrow \mathbb{R}$ that is continuously differentiable over $t$ and twice continuously differentiable over $\bm{x}$, satisfying 
   \begin{equation}
   \label{lower_bound_condition1}
        \begin{cases}
            \mathcal{L}v(t,\bm{x})\leq 0, &\forall \bm{x}\in \mathcal{X}\setminus \mathcal{X}_s, \forall t \in [0,T], \\
            \frac{\partial v(t,\bm{x})}{\partial t}\leq 0, & \forall \bm{x}\in \partial \mathcal{X}\cup \partial \mathcal{X}_s, \forall t\in [0,T],\\ 
            v(T,\bm{x})\geq 1_{\partial \mathcal{X}_s}(\bm{x}), & \forall \bm{x}\in \overline{\mathcal{X}\setminus \mathcal{X}_s},
        \end{cases}
    \end{equation}
then $\mathbb{P}_{\bm{x}_0}^{[0,T]}\leq v(0,\bm{x}_0)$ for $\bm{x}_0\in \mathcal{X}\setminus \mathcal{X}_s$.    
\end{corollary}
\begin{proof}
The first condition in \eqref{lower_bound_condition1}, $\mathcal{L}v(t,\bm{x}) \leq 0$ for $\bm{x} \in \mathcal{X} \setminus \mathcal{X}_s$, directly implies $\widehat{\mathcal{L}}v(t,\bm{x}) \leq 0$ for $\bm{x} \in \mathcal{X} \setminus \mathcal{X}_s$.
The second condition in \eqref{lower_bound_condition1}, $\frac{\partial v(t,\bm{x})}{\partial t} \leq 0$ for $\bm{x} \in \partial\mathcal{X} \cup \partial\mathcal{X}_s$, directly implies $\widehat{\mathcal{L}}v(t,\bm{x}) \leq 0$ for $\bm{x} \in \partial\mathcal{X} \cup \partial\mathcal{X}_s$. Combining these two results with the third condition $v(T,\bm{x})\geq 1_{\partial \mathcal{X}_s}(\bm{x}),  \forall \bm{x}\in \overline{\mathcal{X}\setminus \mathcal{X}_s}$, we have 
 \begin{equation}
 \label{lower_bound_condition11}
        \begin{cases}
            \widehat{\mathcal{L}}v(t,\bm{x})\leq 0, & \forall \bm{x}\in \overline{\mathcal{X}\setminus \mathcal{X}_s},\forall t \in [0,T],\\
            v(T,\bm{x})\geq 1_{\partial \mathcal{X}_s}(\bm{x}), & \forall \bm{x}\in \overline{\mathcal{X}\setminus \mathcal{X}_s}.
        \end{cases}
    \end{equation}
  
  Based on \eqref{lower_bound_condition11}, we can obtain the conclusion using the Dynkin’s formula and Lemma \ref{equiv}: for $\bm{x}_0\in \mathcal{X}\setminus \mathcal{X}_s$, we obtain
\begin{equation*}
\begin{split}
    \mathbb{P}_{\bm{x}_0}^{[0,T]}&=\mathbb{P}( \widehat{\bm{X}}_{\bm{x}_0}^{\bm{w}}(T)\in \partial \mathcal{X}_s)=\mathbb{E}[1_{\partial \mathcal{X}_s}(\widehat{\bm{X}}_{\bm{x}_0}^{\bm{w}}(T))]\leq \mathbb{E}[v(T,\widehat{\bm{X}}_{\bm{x}_0}^{\bm{w}}(T))]\\
    &=v(0,\bm{x}_0)+\mathbb{E}[\int_{0}^T \widehat{\mathcal{L}}v(t,\widehat{\bm{X}}_{\bm{x}_0}^{\bm{w}}(t))dt]\leq v(0,\bm{x}_0).
    \end{split}
    \end{equation*}
    The proof is completed. 
\end{proof}

We found another time-dependent barrier function for upper-bounding the reachability probability $\mathbb{P}_{\bm{x}_0}^{[0,T]}$ in Theorem 5 in \cite{feng2020unbounded}, which is formulated below:
Suppose there exists a constant $\eta>0$ and a barrier function $v(t,\bm{x})\colon \mathbb{R}\times \mathbb{R}^n$, satisfying 
   \begin{equation}
   \label{fen2020}
        \begin{cases}
            \mathcal{L}v(t,\bm{x})\leq 0, & \forall \bm{x}\in \mathcal{X}\setminus \mathcal{X}_s,\forall t \in [0,T],\\
            \frac{\partial v(t,\bm{x})}{\partial t}\leq 0,&\forall \bm{x}\in \partial \mathcal{X}, \forall t\in [0,T],\\ 
            v(t,\bm{x})\geq \eta 1_{\mathcal{X}_s} (\bm{x}), &\forall \bm{x}\in \overline{\mathcal{X}}, \forall t\in [0,T],
        \end{cases}
    \end{equation}
 then $\mathbb{P}_{\bm{x}_0}^{[0,T]}\leq \frac{v(0,\bm{x}_0)}{\eta}$.   

Upon comparing conditions \eqref{lower_bound_condition1} and \eqref{fen2020}, it becomes apparent that condition \eqref{fen2020} imposes the requirement of non-negativity for the barrier function $v(t,\bm{x})$ over $[0,T]\times \overline{\mathcal{X}}$. Conversely, condition \eqref{lower_bound_condition1} solely necessitates non-negativity for the function $v(t,\bm{x})$ over $\overline{\mathcal{X}\setminus \mathcal{X}_s}$ at $t=T$. These disparities arise due to the construction of \eqref{fen2020} using the well-established Doob’s non-negative supermartingale inequality (also known as Ville’s inequality \cite{doob1939jean}). In contrast, condition \eqref{lower_bound_condition1} is formulated by relaxing equation \eqref{eq_pro}.

Corollary \ref{coro1} states that if there exists a barrier function $v(t,\bm{x})\colon [0,T]\times \mathbb{R}^n\rightarrow \mathbb{R}$ satisfying \eqref{lower_bound_condition1}, then the reachability probability $\mathbb{P}_{\bm{x}_0}^{[0,T]}$ can be bounded above by $v(0,\bm{x}_0)$. However, the requirement for $\mathcal{L}v(t,\bm{x})\leq 0$ to hold over $(t,\bm{x}) \in [0,T]\times (\mathcal{X}\setminus \mathcal{X}_s)$ may hinder the acquisition of such a barrier function. In the following, we will further relax this supermartingale requirement.

\begin{theorem}
\label{coro2}
    Suppose there exists a barrier function $v(t,\bm{x})\colon [0,T]\times \mathbb{R}^n\rightarrow \mathbb{R}$ that is continuously differentiable over $t$ and twice continuously differentiable over $\bm{x}$, satisfying 
 \begin{equation}
 \label{lower_bound_condition2}
     \begin{cases}
        \mathcal{L}v(t,\bm{x})\leq \alpha v(t,\bm{x})+\beta, & \forall \bm{x}\in \mathcal{X}\setminus~\mathcal{X}_s, \forall t\in [0,T],\\
        \frac{\partial v(t,\bm{x})}{\partial t} \leq \alpha v(t,\bm{x})+\beta, & \forall \bm{x}\in \partial \mathcal{X} \cup \partial \mathcal{X}_s, \forall t\in [0,T],\\
        v(T,\bm{x})\geq 1_{\partial \mathcal{X}_s}(\bm{x}),& \forall \bm{x}\in \overline{\mathcal{X}\setminus \mathcal{X}_s},
    \end{cases}
\end{equation}   
then
\begin{equation*}
\mathbb{P}_{\bm{x}_0}^{[0,T]}\leq
    \begin{cases}
       v(0,\bm{x}_0)+\beta T, &  \text{if $\alpha=0$},\\
       e^{\alpha T}v(0,\bm{x}_0)+\frac{\beta}{\alpha}(e^{\alpha T}-1), &\text{if $\alpha\neq 0$}
    \end{cases}
\end{equation*}
 for $\bm{x}_0\in \mathcal{X}\setminus \mathcal{X}_s$.    
\end{theorem}
\begin{proof}
From \eqref{lower_bound_condition2}, we have 
 \begin{equation*}
        \begin{cases}
            \widehat{\mathcal{L}}v(t,\bm{x})\leq \alpha v(t,\bm{x})+\beta, & \forall \bm{x}\in \overline{\mathcal{X}\setminus \mathcal{X}_s}, \forall t \in [0,T],\\
            v(T,\bm{x})\geq 1_{\partial \mathcal{X}_s}(\bm{x}), & \forall \bm{x}\in \overline{\mathcal{X}\setminus \mathcal{X}_s}.
        \end{cases}
    \end{equation*}

If $\alpha=0$, we have that 
\[
\begin{split}
\mathbb{P}_{\bm{x}_0}^{[0,T]}&=\mathbb{P}(\widehat{\bm{X}}_{\bm{x}_0}^{\bm{w}}(T)\in \partial \mathcal{X}_s)=\mathbb{E}[1_{\partial \mathcal{X}_s}(\widehat{\bm{X}}_{\bm{x}_0}^{\bm{w}}(T))]\\
&\leq \mathbb{E}[v(T,\widehat{\bm{X}}_{\bm{x}_0}^{\bm{w}}(T))]=v(0,\bm{x}_0)+\mathbb{E}[\int_{0}^T \widehat{\mathcal{L}}v(t,\widehat{\bm{X}}_{\bm{x}_0}^{\bm{w}}(t))dt]\\
&\leq v(0,\bm{x}_0)+\beta T.
\end{split}
\]

When $\alpha\neq 0$, we have, by the Gr\"onwall's inequality in the differential form, that 
\[\mathbb{E}[v(T,\widehat{\bm{X}}_{\bm{x}_0}^{\bm{w}}(T))]\leq e^{\alpha T}v(0,\bm{x}_0)+\frac{\beta}{\alpha}(e^{\alpha T}-1).\]
Also, since $v(T,\bm{x})\geq 1_{\partial \mathcal{X}_s}(\bm{x}), \forall \bm{x}\in \overline{\mathcal{X}\setminus \mathcal{X}_s}$, we have $\mathbb{P}_{\bm{x}_0}^{[0,T]}=\mathbb{E}[1_{\partial \mathcal{X}_s}(\widehat{\bm{X}}_{\bm{x}_0}^{\bm{w}}(T))]\leq \mathbb{E}[v(T,\widehat{\bm{X}}_{\bm{x}_0}^{\bm{w}}(T))]$ and thus $\mathbb{P}_{\bm{x}_0}^{[0,T]}=\mathbb{P}( \widehat{\bm{X}}_{\bm{x}_0}^{\bm{w}}(T)\in \partial \mathcal{X}_s)\leq e^{\alpha T}v(0,\bm{x}_0)+\frac{\beta}{\alpha}(e^{\alpha T}-1)$.    

The proof is completed. 
\end{proof}

It is easy to observe that condition \eqref{lower_bound_condition1} in Corollary \ref{coro1} is a special case of the one \eqref{lower_bound_condition2} in Theorem \ref{coro2} with $\alpha=\beta=0$. Below, a straightforward variant of the result in Theorem \ref{coro2} is presented. Instead of resorting to a time-dependent barrier function $v(t,\bm{x})\colon [0,T]\times \mathbb{R}^n \rightarrow \mathbb{R}$, the focus is on finding a time-independent one $v(\bm{x})\colon \mathbb{R}^n \rightarrow \mathbb{R}$.

\begin{corollary}
\label{coro3}
    Suppose there exists a barrier function $v(\bm{x})\colon \mathbb{R}^n\rightarrow \mathbb{R}$, which is twice continuously differentiable, satisfying 
      \begin{equation}
    \label{lower_bound_3}
        \begin{cases}
            \mathcal{L}v(\bm{x})\leq \alpha v(\bm{x})+\beta, &\forall \bm{x}\in \mathcal{X}\setminus \mathcal{X}_s, \\
            0\leq \alpha v(\bm{x})+\beta, & \forall \bm{x}\in \partial \mathcal{X}\cup \partial \mathcal{X}_s,\\
            v(\bm{x})\geq 1_{\partial \mathcal{X}_s}(\bm{x}),& \forall \bm{x}\in \overline{\mathcal{X}\setminus \mathcal{X}_s},
        \end{cases}
    \end{equation} 
then
\begin{equation*}
\mathbb{P}_{\bm{x}_0}^{[0,T]}\leq
    \begin{cases}
      v(\bm{x}_0)+\beta T, & \text{if $\alpha=0$},\\
      e^{\alpha T}v(\bm{x}_0)+\frac{\beta}{\alpha}(e^{\alpha T}-1), & \text{if  $\alpha\neq 0$},
    \end{cases}
\end{equation*}
 for $\bm{x}_0\in \mathcal{X}\setminus \mathcal{X}_s$.    
\end{corollary}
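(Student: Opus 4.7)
The plan is to reduce Corollary \ref{coro3} directly to Lemma \ref{coro2} by lifting the time-independent barrier function $v(\bm{x})$ to a trivially time-dependent one. Concretely, I would set $\widetilde{v}(t,\bm{x}) := v(\bm{x})$ for all $(t,\bm{x}) \in [0,T] \times \mathbb{R}^n$. Then $\widetilde{v}$ inherits from $v$ the required twice continuous differentiability in $\bm{x}$, is continuously differentiable in $t$ with $\frac{\partial \widetilde{v}}{\partial t} \equiv 0$, and by Definition \ref{inf_generator} satisfies $\mathcal{L}\widetilde{v}(t,\bm{x}) = \mathcal{L}v(\bm{x})$ on $\mathcal{X}\setminus \mathcal{X}_s$.

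Next I would check the three hypotheses of \eqref{lower_bound_condition2} for $\widetilde{v}$ one by one. The first condition of \eqref{lower_bound_condition2} reduces to $\mathcal{L}v(\bm{x}) \leq \alpha v(\bm{x}) + \beta$ on $\mathcal{X}\setminus\mathcal{X}_s$, which is the first line of \eqref{lower_bound_3}. The second condition becomes $0 \leq \alpha v(\bm{x}) + \beta$ on $\partial \mathcal{X} \cup \partial \mathcal{X}_s$, matching the second line of \eqref{lower_bound_3} exactly. The terminal condition $\widetilde{v}(T,\bm{x}) \geq 1_{\partial \mathcal{X}_s}(\bm{x})$ on $\overline{\mathcal{X}\setminus\mathcal{X}_s}$ is the third line of \eqref{lower_bound_3} verbatim.

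Having verified the hypotheses, I would invoke Lemma \ref{coro2} on $\widetilde{v}$ and read off the resulting bound, using $\widetilde{v}(0,\bm{x}_0) = v(\bm{x}_0)$. This yields $\mathbb{P}_{\bm{x}_0}^{[0,T]} \leq v(\bm{x}_0) + \beta T$ in the $\alpha = 0$ case and $\mathbb{P}_{\bm{x}_0}^{[0,T]} \leq e^{\alpha T} v(\bm{x}_0) + \frac{\beta}{\alpha}(e^{\alpha T}-1)$ in the $\alpha \neq 0$ case, which is precisely the claim.

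There is no substantive obstacle here, as the result is truly a specialization of Lemma \ref{coro2}. The only minor subtlety worth flagging explicitly is that the boundary condition on $\partial \mathcal{X} \cup \partial \mathcal{X}_s$ in \eqref{lower_bound_3} is stated in the degenerate form $0 \leq \alpha v(\bm{x}) + \beta$, which is exactly what the $\partial_t \widetilde{v} \leq \alpha \widetilde{v} + \beta$ clause of \eqref{lower_bound_condition2} collapses to once $\partial_t \widetilde{v}$ vanishes identically; no further computation is needed.
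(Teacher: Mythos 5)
Your proposal is correct and matches the paper's (implicit) argument exactly: the paper presents Corollary \ref{coro3} as an immediate specialization of Lemma \ref{coro2} obtained by taking the barrier function to be constant in $t$, so that $\partial_t v \equiv 0$ and each clause of \eqref{lower_bound_condition2} collapses to the corresponding clause of \eqref{lower_bound_3}. No further comment is needed.
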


From $v(\bm{x})\geq 1_{\partial \mathcal{X}_s}(\bm{x}), \forall \bm{x}\in \overline{\mathcal{X}\setminus \mathcal{X}_s}$ in \eqref{lower_bound_3}, we can obtain $v(\bm{x})\geq 0$ for $\bm{x}\in \partial \mathcal{X}$ and $v(\bm{x})\geq 1$ for $\bm{x}\in \partial \mathcal{X}_s$. Since $0\leq \alpha v(\bm{x})+\beta, \forall \bm{x}\in \partial \mathcal{X}\cup \partial \mathcal{X}_s$, we can obtain $\alpha+\beta\geq 0$ if $\alpha\leq 0$. However, it is worth remarking here that $\alpha >0$ is permitted in condition \eqref{lower_bound_3}.

Another upper bound of $\mathbb{P}_{\bm{x}_0}^{[0,T]}$ was derived in \cite{santoyo2021barrier}: if $v(\bm{x})$ satisfies 
  \begin{equation}
    \label{lower_bound_4}
        \begin{cases}
            \mathcal{L}v(\bm{x})\leq \alpha v(\bm{x})+\beta, & \forall \bm{x}\in \mathcal{X}\setminus \mathcal{X}_s, \\
            v(\bm{x})\geq 1, & \forall \bm{x}\in \mathcal{X}_s,\\
            v(\bm{x})\geq 0, & \forall \bm{x}\in \mathcal{X},
        \end{cases}
    \end{equation} 
an upper bound of $\mathbb{P}_{\bm{x}_0}^{[0,T]}$ is 
   \begin{equation*}
       \begin{cases}
       (v(\bm{x}_0)-(e^{\beta T}-1)\frac{\beta}{\alpha})e^{-\beta T}, & \text{if $\alpha<0 \wedge \alpha+\beta>0$},\\
       v(\bm{x}_0)+\beta T, & \text{if $\alpha=0 \wedge \beta\geq 0$},\\
       e^{-\beta T}(v(\bm{x}_0)-1) +1,&    \text{if $\alpha<0 \wedge \alpha+\beta\leq 0 \wedge \beta\geq 0$}.
       \end{cases}
   \end{equation*}
These results are obtained via following Theorem 1 in Chapter 3 in \cite{kushner1967stochastic} and are built upon the known
Doob’s nonnegative supermartingale inequality (or, Ville’s
inequality \cite{doob1939jean}) as condition \eqref{fen2020}. However, it is observed that as $\alpha$ approaches $0^-$, $(v(\bm{x}_0)-(e^{\beta T}-1)\frac{\beta}{\alpha})e^{-\beta T}$ is not equal to $v(\bm{x}_0)+\beta T$ as expected. In contrast, it tends to infinity, which is overly conservative. Condition \eqref{lower_bound_3} in Corollary \ref{coro3} has certain advantages over \eqref{lower_bound_4}: firstly, as $\alpha$ approaches $0^-$, the expression $e^{\alpha T}v(\bm{x}_0)+\frac{\beta}{\alpha}(e^{\alpha T}-1)$ converges to $v(\bm{x}_0)+\beta T$; secondly, when $v(\bm{x}_0)\leq 1<-\frac{\beta}{\alpha}$ (when $\alpha<0$, $1<-\frac{\beta}{\alpha}$ implies $\alpha+\beta>0$), we can obtain $(v(\bm{x}_0)-(e^{\beta T}-1)\frac{\beta}{\alpha})e^{-\beta T}>e^{\alpha T}v(\bm{x}_0)+\frac{\beta}{\alpha}(e^{\alpha T}-1)$. On the other hand, it can be observed that, when $v(\bm{x}_0) > 1$, $\alpha < 0$, $\alpha + \beta > 0$, and $T > 0$, the upper bound $
\bigl(v(\bm{x}_0) - (e^{\beta T}-1)\tfrac{\beta}{\alpha}\bigr) e^{-\beta T}$ is always greater than 1 and is therefore meaningless. Consequently, under the condition $\alpha < 0$ and $\alpha + \beta > 0$, a constraint $v(\bm{x}_0)\leq 1$ should be added in \eqref{lower_bound_4} in order to make the upper bound $\bigl(v(\bm{x}_0) - (e^{\beta T}-1)\frac{\beta}{\alpha}\bigr) e^{-\beta T}$ be meaningful (i.e., be less than or equal to 1); fourthly, in the case $\alpha=0$, condition \eqref{lower_bound_3} is strictly weaker than \eqref{lower_bound_4}, because it relaxes the requirements that $v(\bm{x})$ remain nonnegative throughout $\mathcal{X}$ and that $v(\bm{x}) \geq 1$ hold over the entire $\mathcal{X}_s$. Finally, unlike the aforementioned condition \eqref{lower_bound_4}, condition \eqref{lower_bound_3} is not restricted to the scenario where $\alpha\leq 0$ and/or $\beta\geq 0$.

\subsection{Lower-bounding Reachability Probabilities}
\label{subsec:lup}
In this subsection, we present our barrier functions for lower-bounding the reachability probability $\mathbb{P}_{\bm{x}_0}^{[0,T]}$.

The construction of the first barrier function was inspired by  \cite{xue2021reach,Xue2023}. It cannot be obtained via relaxing equation \eqref{eq_pro} directly. An auxiliary function is introduced. 
\begin{proposition}
\label{coro4}
    Suppose there exist a barrier function $v(t,\bm{x})\colon [0,T]\times \mathbb{R}^n\rightarrow \mathbb{R}$ and a function $w(t,\bm{x})\colon [0,T] \times \mathbb{R}^n \rightarrow \mathbb{R}$ with $\sup_{(t,\bm{x})\in [0,T]\times \overline{\mathcal{X}}}|w(t,\bm{x})|\leq M$ that are continuously differentiable over $t$ and twice continuously differentiable over $\bm{x}$, satisfying 
   \begin{equation}
   \label{upper_bound_1}
        \begin{cases}
            \mathcal{L}v(t,\bm{x})\geq 0, & \forall \bm{x}\in \mathcal{X}\setminus \mathcal{X}_s,\forall t \in [0,T],\\
            \frac{\partial v(t,\bm{x})}{\partial t}\geq 0, & \forall \bm{x}\in \partial \mathcal{X}\cup \partial \mathcal{X}_s, \forall t\in [0,T],\\ 
            v(t,\bm{x})\leq 1+\frac{\partial w(t,\bm{x})}{\partial t},& \forall \bm{x}\in \partial \mathcal{X}_s, \forall t\in [0,T],\\
            v(t,\bm{x})\leq \mathcal{L}w(t,\bm{x}),& \forall \bm{x}\in \mathcal{X}\setminus \mathcal{X}_s,  \forall t \in [0,T],\\
            v(t,\bm{x})\leq \frac{\partial w(t,\bm{x})}{\partial t},& \forall \bm{x}\in \partial \mathcal{X}, \forall t\in [0,T],
        \end{cases}
    \end{equation}
then, 
$\mathbb{P}_{\bm{x}_0}^{[0,T]}\geq v(0,\bm{x}_0)-\frac{2M}{T}$ for $\bm{x}_0\in \mathcal{X}\setminus \mathcal{X}_s$.    
\end{proposition}
\begin{proof}
From \eqref{upper_bound_1}, we have 
\[
        \begin{cases}
            \widehat{\mathcal{L}}v(t,\bm{x})\geq 0, & \forall \bm{x}\in \overline{\mathcal{X}\setminus \mathcal{X}_s},\forall t \in [0,T],\\
            v(t,\bm{x})\leq 1_{\partial \mathcal{X}_s}(\bm{x})+\widehat{\mathcal{L}}w(t,\bm{x}), & \forall \bm{x}\in \overline{\mathcal{X}\setminus \mathcal{X}_s}, \forall t \in [0,T]. 
        \end{cases}
\]

According to $\widehat{\mathcal{L}}v(t,\bm{x})\geq 0, \forall \bm{x}\in \overline{\mathcal{X}\setminus \mathcal{X}_s},\forall t \in [0,T]$, we have, for $t\in [0,T]$, that 
\[
\begin{aligned}
    \mathbb{E}[v(t,\widehat{\bm{X}}_{\bm{x}_0}^{\bm{w}}(t))]&=v(0,\bm{x}_0)+\mathbb{E}[\int_{0}^t \widehat{\mathcal{L}}v(\tau,\widehat{\bm{X}}_{\bm{x}_0}^{\bm{w}}(t))d\tau]\geq v(0,\bm{x}_0).
\end{aligned}
\]
Further, from  $v(t,\bm{x})\leq 1_{\partial \mathcal{X}_s}(\bm{x})+\widehat{\mathcal{L}}w(t,\bm{x}), \forall \bm{x}\in \overline{\mathcal{X}\setminus \mathcal{X}_s}, \forall t\in [0,T]$, we have 
\begin{equation*}
\begin{split}
& \hphantom{{}={}} \mathbb{P}(\widehat{\bm{X}}_{\bm{x}_0}^{\bm{w}}(T)\in \partial \mathcal{X}_s)=\mathbb{E}[1_{\partial \mathcal{X}_s}(\widehat{\bm{X}}_{\bm{x}_0}^{\bm{w}}(T))]\\
&\geq \frac{\int_{0}^T \mathbb{E}[1_{\partial \mathcal{X}_s}(\widehat{\bm{X}}_{\bm{x}_0}^{\bm{w}}(t))]dt}{T} (\text{according to Lemma \ref{equiv}})\\
&\geq \frac{\int_{0}^T \mathbb{E}[v(t,\widehat{\bm{X}}_{\bm{x}_0}^{\bm{w}}(t))]dt}{T}-\frac{\mathbb{E}[w(T,\widehat{\bm{X}}_{\bm{x}_0}^{\bm{w}}(T))]-w(0,\bm{x}_0)}{T}\\
&\geq v(0,\bm{x}_0)- \frac{\mathbb{E}[w(T,\widehat{\bm{X}}_{\bm{x}_0}^{\bm{w}}(T))]-w(0,\bm{x}_0)}{T}\\
&\geq v(0,\bm{x}_0)-\frac{2M}{T}.
  \end{split}
\end{equation*}
The proof is completed. 
\end{proof}

\begin{remark}
\label{Remark1}
The function $w(t,\bm{x})$ in Proposition \ref{coro4} cannot be removed. If this function is removed, we cannot find a  barrier function $v(t,\bm{x})$ satisfying condition \eqref{upper_bound_1} such that the lower bound $v(0,\bm{x}_0)$ of the reachability probability $\mathbb{P}_{\bm{x}_0}^{[0,T]}$ is larger than zero. A brief explanation is given here:   if the function $w(t,\bm{x})$ is removed, $v(t,\bm{x})$ will satisfy
\begin{equation*}
    \begin{cases}
        v(t,\bm{x})\leq 1, & \forall \bm{x}\in \partial \mathcal{X}_s, \forall t\in [0,T],\\
        v(t,\bm{x})\leq 0, & \forall \bm{x}\in \mathcal{X}\setminus \mathcal{X}_s, \forall t\in [0,T],
    \end{cases}
\end{equation*}
which implies $v(t,\bm{x})\leq 0, \forall \bm{x}\in \overline{\mathcal{X}\setminus \mathcal{X}_s}, \forall t\in [0,T]$, thus, $v(0,\bm{x}_0)$ will always be less than or equal to zero for $\bm{x}_0\in \mathcal{X}\setminus \mathcal{X}_s$. 
\end{remark}

Similar to Theorem \ref{coro2}, a barrier function that relaxes the submartingale requirement (i.e., $\mathcal{L}v(t,\bm{x})\geq 0, \forall \bm{x}\in \mathcal{X}\setminus \mathcal{X}_s,\forall t \in [0,T]$) in Proposition \ref{coro4} is formulated in Theorem \ref{coro5}.

\begin{theorem}
\label{coro5}
    Suppose there exist a barrier function $v(t,\bm{x})\colon [0,T]\times \mathbb{R}^n\rightarrow \mathbb{R}$ and a function $w(t,\bm{x})\colon [0,T]\times \mathbb{R}^n \rightarrow \mathbb{R}$ with $\sup_{(t,\bm{x})\in [0,T]\times \overline{\mathcal{X}}}|w(t,\bm{x})|\leq M$ that are continuously differentiable over $t$ and twice continuously differentiable over $\bm{x}$, satisfying 
   \begin{equation}
      \label{upper_bound_2}
        \begin{cases}
            \mathcal{L}v(t,\bm{x})\geq \alpha v(t,\bm{x})+\beta, & \forall \bm{x}\in \mathcal{X}\setminus \mathcal{X}_s,\forall t \in [0,T],\\
            \frac{\partial v(t,\bm{x})}{\partial t}\geq \alpha v(t,\bm{x})+\beta, & \forall \bm{x}\in \partial \mathcal{X}\cup \partial \mathcal{X}_s, \forall t\in [0,T],\\ 
            v(t,\bm{x})\leq 1+\frac{\partial w(t,\bm{x})}{\partial t}, &\forall \bm{x}\in \partial \mathcal{X}_s, \forall t\in [0,T],\\
            v(t,\bm{x})\leq \mathcal{L}w(t,\bm{x}), & \forall \bm{x}\in \mathcal{X}\setminus \mathcal{X}_s, \forall t \in [0,T],\\
            v(t,\bm{x})\leq \frac{\partial w(t,\bm{x})}{\partial t}, & \forall \bm{x}\in \partial \mathcal{X}, \forall t\in [0,T],
        \end{cases}
    \end{equation}
then
\begin{equation*}
\mathbb{P}_{\bm{x}_0}^{[0,T]}\geq 
\begin{cases}
\frac{(\frac{1}{\alpha}v(0,\bm{x}_0)+\frac{\beta}{\alpha^2})(e^{\alpha T}-1)-\frac{\beta}{\alpha}T}{T} -\frac{2M}{T}, & \text{if $\alpha\neq 0$}, \\
v(0,\bm{x}_0)+\frac{1}{2}\beta T-\frac{2M}{T}, & \text{if $\alpha=0$}
\end{cases}
\end{equation*}
for $\bm{x}_0\in \mathcal{X}\setminus \mathcal{X}_s$.    
\end{theorem}
\begin{proof}
From \eqref{upper_bound_2}, we have
 \begin{equation*}
        \begin{cases}
            \widehat{\mathcal{L}}v(t,\bm{x})\geq \alpha v(t,\bm{x})+\beta, & \forall \bm{x}\in \overline{\mathcal{X}\setminus \mathcal{X}_s},\forall t \in [0,T],\\
            v(t,\bm{x})\leq 1_{\partial \mathcal{X}_s}(\bm{x})+\widehat{\mathcal{L}}w(t,\bm{x}),&  \forall \bm{x}\in \overline{\mathcal{X}\setminus \mathcal{X}_s}, \forall t \in [0,T].
        \end{cases}
    \end{equation*}

When $\alpha\neq 0$, according to $\widehat{\mathcal{L}}v(t,\bm{x})\geq 0, \forall \bm{x}\in \overline{\mathcal{X}\setminus \mathcal{X}_s},\forall t \in [0,T]$, we have, for $t\in [0,T]$, that 
\begin{equation*}
\begin{split}
    \mathbb{E}[v(t,\widehat{\bm{X}}_{\bm{x}_0}^{\bm{w}}(t))]\geq e^{\alpha t}v(0,\bm{x}_0)+\frac{\beta}{\alpha}(e^{\alpha t}-1).
    \end{split}
    \end{equation*}

Further, from  $v(t,\bm{x})\leq 1_{\partial \mathcal{X}_s}(\bm{x})+\widehat{\mathcal{L}}w(t,\bm{x}), \forall t \in [0,T], \forall \bm{x}\in \overline{\mathcal{X}\setminus \mathcal{X}_s}$, we have 
\begin{equation*}
\begin{split}
&\hphantom{{}={}}\mathbb{P}(\widehat{\bm{X}}_{\bm{x}_0}^{\bm{w}}(T)\in \partial \mathcal{X}_s)=\mathbb{E}[1_{\partial \mathcal{X}_s}(\widehat{\bm{X}}_{\bm{x}_0}^{\bm{w}}(T))]\\
&\geq \frac{\int_{0}^T \mathbb{E}[1_{\partial \mathcal{X}_s}(\widehat{\bm{X}}_{\bm{x}_0}^{\bm{w}}(t))]dt }{T} (\text{according to Lemma \ref{equiv}})\\
&\geq \frac{\int_{0}^T \mathbb{E}[v(t,\widehat{\bm{X}}_{\bm{x}_0}^{\bm{w}}(t))]dt}{T}-\frac{\mathbb{E}[w(T,\widehat{\bm{X}}_{\bm{x}_0}^{\bm{w}}(T))]-w(0,\bm{x}_0)}{T}\\
&\geq \frac{\int_{0}^T e^{\alpha t}v(0,\bm{x}_0)+\frac{\beta}{\alpha}(e^{\alpha t}-1) dt}{T}- \frac{\mathbb{E}[w(T,\widehat{\bm{X}}_{\bm{x}_0}^{\bm{w}}(T))]-w(0,\bm{x}_0)}{T}\\
&\geq \frac{(\frac{1}{\alpha}e^{\alpha t}v(0,\bm{x}_0)+\frac{\beta}{\alpha^2}e^{\alpha t}-\frac{\beta}{\alpha}t)\mid_{0}^T}{T}-\frac{2M}{T}\\
&= \frac{(\frac{1}{\alpha}v(0,\bm{x}_0)+\frac{\beta}{\alpha^2})(e^{\alpha T}-1)-\frac{\beta}{\alpha}T}{T} -\frac{2M}{T}.
  \end{split}
\end{equation*}

The conclusion for $\alpha=0$ can be obtained via following the above procedure. The proof is completed. 
\end{proof}

Further, a straightforward result can be obtained from Theorem \ref{coro5} when searching for a time-independent function $v(\bm{x})\colon \mathbb{R}^n \rightarrow \mathbb{R}$ instead of a time-dependent function $v(t,\bm{x})\colon [0,T]\times \mathbb{R}^n \rightarrow \mathbb{R}$.

\begin{corollary}
\label{time-in-H}
    Suppose there exist twice continuously differentiable functions $v(\bm{x})\colon \mathbb{R}^n\rightarrow \mathbb{R}$ and $w(\bm{x})\colon \mathbb{R}^n \rightarrow \mathbb{R}$ with $\sup_{\bm{x}\in \overline{\mathcal{X}}}|w(\bm{x})|\leq M$, satisfying 
   \begin{equation}
      \label{upper_bound_3}
        \begin{cases}
            \mathcal{L}v(\bm{x})\geq \alpha v(\bm{x})+\beta, & \forall \bm{x}\in \mathcal{X}\setminus \mathcal{X}_s,\\
            0\geq \alpha v(\bm{x})+\beta, & \forall \bm{x}\in \partial \mathcal{X}\cup \partial \mathcal{X}_s\\ v(\bm{x})\leq 1 &\forall \bm{x}\in \partial \mathcal{X}_s,\\
            v(\bm{x})\leq \mathcal{L}w(\bm{x}), & \forall \bm{x}\in \mathcal{X}\setminus \mathcal{X}_s, \\
            v(\bm{x})\leq 0, & \forall \bm{x}\in \partial \mathcal{X},
        \end{cases}
    \end{equation}
then
\[\mathbb{P}_{\bm{x}_0}^{[0,T]}\geq
\begin{cases}
\frac{(\frac{1}{\alpha}v(\bm{x}_0)+\frac{\beta}{\alpha^2})(e^{\alpha T}-1)-\frac{\beta}{\alpha}T}{T} -\frac{2M}{T}, & \text{if $\alpha\neq 0$}, \\
v(\bm{x}_0)+\frac{1}{2}\beta T-\frac{2M}{T}, &\text{if $\alpha=0$}
\end{cases}
\]
for $\bm{x}_0\in \mathcal{X}\setminus \mathcal{X}_s$.    
\end{corollary}

\begin{remark}
When the subset $\mathcal{X}_s$ is equal to the complement of the state constrained set $\mathcal{X}$, i.e., $\mathcal{X}_s=\mathbb{R}^n\setminus  \mathcal{X}$, and $\mathcal{X}=\{\,\bm{x}\mid v(\bm{x})<1\,\}$ with $\partial \mathcal{X}=\{\,\bm{x}\mid v(\bm{x})=1\,\}$,  a condition that lower-bounds the reachability probability is formulated in Theorem 2 in  \cite{xue2023safe}, which is presented below,
   \begin{equation}
      \label{upper_bound_4}
        \begin{cases}
            \mathcal{L}v(\bm{x})\geq \alpha v(\bm{x})+\beta,~~~\forall \bm{x}\in \mathcal{X},\\
            \alpha+\beta>0.
        \end{cases}
    \end{equation}
Besides, when the $\mathcal{X}_s$ is a subset of the set $\mathcal{X}=\{\,\bm{x}\mid v(\bm{x})>0\,\}$ with $\partial \mathcal{X}=\{\,\bm{x}\mid v(\bm{x})=0\,\}$ and is equal to $\{\,\bm{x}\mid v(\bm{x})\geq 1\,\}$,  a condition that lower-bounds the reachability probability is formulated in Theorem 1 in  \cite{xue2023safe}, which is formulated below,
   \begin{equation}
      \label{upper_bound_5}
        \begin{cases}
            \mathcal{L}v(\bm{x})\geq \alpha v(\bm{x})+\beta, ~~~\forall \bm{x}\in \mathcal{X}\setminus \mathcal{X}_s,\\
            \alpha>-\beta\geq 0.
        \end{cases}
    \end{equation}

 When applied to the aforementioned cases in \cite{xue2023safe}, the condition \eqref{upper_bound_3} complements conditions \eqref{upper_bound_4} and \eqref{upper_bound_5} with $\alpha+\beta\leq 0$. 

\end{remark}

Besides, we can also obtain similar conclusions as in Remark \ref{re3} and \ref{re4}.

\section{Bounding Reachability Probabilities II}
\label{sec:bupII}
This section introduces our barrier functions for upper- and lower-bounding the reachability probability $\mathbb{P}_{\bm{x}_0}^{T}$ in Definition \ref{un_proII}.  They are respectively formulated in Subsection \ref{subsec:uupII} and \ref{subsec:lupII}.

Since the event for reachability at a specific time instant (Definition \ref{un_proII}) is fundamentally different from the one for a time horizon (Definition \ref{un_pro})—specifically, $A = \{\,\bm{w}\in \Omega \mid \bm{X}_{\bm{x}_0}^{\bm{w}}(T) \in \mathcal{X}_s\wedge \forall \tau \in [0,T). \bm{X}_{\bm{x}_0}^{\bm{w}}(\tau) \in \mathcal{X}\,\} \neq \{\,\bm{w}\in \Omega \mid \bm{X}_{\bm{x}_0}^{\bm{w}}(T) \in \partial \mathcal{X}_s\wedge \forall \tau \in [0,T). \bm{X}_{\bm{x}_0}^{\bm{w}}(\tau) \in \mathcal{X}\,\}= B$—the construction of the barrier functions in this section lies on a different auxiliary stochastic process $\{\widetilde{\bm{X}}_{\bm{x}_0}^{\bm{w}}(t), t\in \mathbb{R}_{\geq 0}\}$ for $\bm{x}_0\in \overline{\mathcal{X}\setminus \mathcal{X}_s}$ that is a stopped process corresponding to $\{\bm{X}_{\bm{x}_0}^{\bm{w}}(t), t\in [0,T^{\bm{x}_0}(\bm{w}))\}$ and the set $\overline{\mathcal{X}}$ rather than $\overline{\mathcal{X}\setminus \mathcal{X}_s}$ as $\{\widehat{\bm{X}}_{\bm{x}_0}^{\bm{w}}(t), t\in [0,T^{\bm{x}_0}(\bm{w}))\}$, i.e., 
\begin{equation}
\widetilde{\bm{X}}_{\bm{x}_0}^{\bm{w}}(t)=
\begin{cases}
\bm{X}_{\bm{x}_0}^{\bm{w}}(t), & \text{if $t<\tau^{\bm{x}_0}(\bm{w})$},\\
\bm{X}_{\bm{x}_0}^{\bm{w}}(\tau^{\bm{x}_0}(\bm{w})), & \text{if $t\geq \tau^{\bm{x}_0}(\bm{w})$},
\end{cases}
\end{equation}
where $\tau^{\bm{x}_0}(\bm{w})=\inf\{\,t\mid \bm{X}_{\bm{x}_0}^{\bm{w}}(t) \in \partial \mathcal{X}\,\}$ is the first time of exit of $\bm{X}_{\bm{x}_0}^{\bm{w}}(t)$ from the set $\mathcal{X}$. Similar to the stopped process in Section \ref{sec:bup}, the infinitesimal generator $\widetilde{\mathcal{L}}$ on a test function $v(t,\bm{x})$ corresponding to $\widetilde{\bm{X}}_{\bm{x}_0}^{\bm{w}}(t)$ is identical to the one corresponding to $\bm{X}_{\bm{x}_0}^{\bm{w}}(t)$ over $\mathcal{X}$, and is equal to $\frac{\partial v(t,\bm{x})}{\partial t}$ on $\partial \mathcal{X}$ \cite{kushner1967stochastic}. That is, for $v(t,\bm{x})$ which is twice continuously differentiable over $\bm{x}$ and continuously differentiable over $t$, 
\[
\begin{aligned}
\widetilde{\mathcal{L}}v(t,\bm{x})=\mathcal{L}v(t,\bm{x})=&\frac{\partial v(t,\bm{x})}{\partial t}+\frac{\partial v(t,\bm{x})}{\partial \bm{x}}\bm{b}(\bm{x})+\frac{1}{2}\textbf{\rm tr}(\bm{\sigma}(\bm{x})^{\top}\frac{\partial^2 v(t,\bm{x})}{\partial \bm{x}^2} \bm{\sigma}(\bm{x}))
\end{aligned}
\] for $(\bm{x},t)\in \mathcal{X} \times [0,T]$ and $\widetilde{\mathcal{L}}v(t,\bm{x})=\frac{\partial v(t,\bm{x})}{\partial t}$ for $\bm{x}\in \partial \mathcal{X}$ and $t\in [0,T]$.

The exact reachability probability $\mathbb{P}_{\bm{x}_0}^{T}$ is equal to the probability of reaching the set $\mathcal{X}_s$ at the time instant $T$ for the above auxiliary stochastic process. 

\begin{lemma}
\label{equivII}
    Given $\bm{x}_0\in \mathcal{X}\setminus \mathcal{X}_s$, \[
    \begin{split}
   & \mathbb{P}_{\bm{x}_0}^{T}=\mathbb{P}(\widetilde{\bm{X}}_{\bm{x}_0}^{\bm{w}}(T)\in \mathcal{X}_s)=\mathbb{E}[1_{\mathcal{X}_s}(\widetilde{\bm{X}}_{\bm{x}_0}^{\bm{w}}(T))].
 \end{split}
 \] 
\end{lemma}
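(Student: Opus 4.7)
The plan is to reduce the lemma to an event equality and then invoke the definition of expectation. Concretely, I would show
\[
\{\bm{w}\in\Omega : \bm{X}_{\bm{x}_0}^{\bm{w}}(T)\in\mathcal{X}_s \wedge \forall\tau\in[0,T).\,\bm{X}_{\bm{x}_0}^{\bm{w}}(\tau)\in\mathcal{X}\} = \{\bm{w}\in\Omega : \widetilde{\bm{X}}_{\bm{x}_0}^{\bm{w}}(T)\in\mathcal{X}_s\}.
\]
Once this is established, $\mathbb{P}_{\bm{x}_0}^{T}=\mathbb{P}(\widetilde{\bm{X}}_{\bm{x}_0}^{\bm{w}}(T)\in\mathcal{X}_s)$ follows immediately, and the identification with $\mathbb{E}[1_{\mathcal{X}_s}(\widetilde{\bm{X}}_{\bm{x}_0}^{\bm{w}}(T))]$ is just the definition of probability as expectation of an indicator.

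The pivotal structural observation I would record first is that since $\mathcal{X}$ is open and $\mathcal{X}_s\subseteq\mathcal{X}$, we have $\partial\mathcal{X}\cap\mathcal{X}_s=\emptyset$. I would then perform a case analysis on $\tau^{\bm{x}_0}(\bm{w})=\inf\{t:\bm{X}_{\bm{x}_0}^{\bm{w}}(t)\in\partial\mathcal{X}\}$ relative to $T$. For the forward inclusion, the condition $\bm{X}_{\bm{x}_0}^{\bm{w}}(\tau)\in\mathcal{X}$ on $[0,T)$ rules out any hit of $\partial\mathcal{X}$ before time $T$, so $\tau^{\bm{x}_0}(\bm{w})\geq T$; in either subcase $\tau^{\bm{x}_0}(\bm{w})>T$ or $\tau^{\bm{x}_0}(\bm{w})=T$, the definition of the stopped process gives $\widetilde{\bm{X}}_{\bm{x}_0}^{\bm{w}}(T)=\bm{X}_{\bm{x}_0}^{\bm{w}}(T)\in\mathcal{X}_s$ by hypothesis.

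For the reverse inclusion, I would argue by exclusion. If $\tau^{\bm{x}_0}(\bm{w})\leq T$, the stopped-process definition forces $\widetilde{\bm{X}}_{\bm{x}_0}^{\bm{w}}(T)=\bm{X}_{\bm{x}_0}^{\bm{w}}(\tau^{\bm{x}_0}(\bm{w}))\in\partial\mathcal{X}$ (using right-continuity of sample paths at the exit time), which contradicts $\widetilde{\bm{X}}_{\bm{x}_0}^{\bm{w}}(T)\in\mathcal{X}_s$ together with $\partial\mathcal{X}\cap\mathcal{X}_s=\emptyset$. Hence $\tau^{\bm{x}_0}(\bm{w})>T$, so $\widetilde{\bm{X}}_{\bm{x}_0}^{\bm{w}}(T)=\bm{X}_{\bm{x}_0}^{\bm{w}}(T)\in\mathcal{X}_s$, and continuity of the path, starting from $\bm{x}_0\in\mathcal{X}$ and not meeting $\partial\mathcal{X}$ before $T$, yields $\bm{X}_{\bm{x}_0}^{\bm{w}}(\tau)\in\mathcal{X}$ for all $\tau\in[0,T)$.

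The only delicate point, and the one I would handle most carefully, is the boundary case $\tau^{\bm{x}_0}(\bm{w})=T$: the condition in Definition \ref{un_proII} only requires the trajectory to remain in $\mathcal{X}$ strictly before $T$, so at first glance the trajectory could ``just reach'' $\partial\mathcal{X}$ at time $T$ while also touching $\mathcal{X}_s$. The disjointness $\partial\mathcal{X}\cap\mathcal{X}_s=\emptyset$, coming from $\mathcal{X}$ being open, is exactly what eliminates this case and makes the event equality work; beyond this observation the proof is routine bookkeeping, so I do not anticipate a serious obstacle.
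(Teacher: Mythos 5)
The paper states Lemma \ref{equivII} without proof (its counterpart, Lemma \ref{equiv}, is imported from an external reference), so there is no in-paper argument to compare against; your proposal supplies the missing justification and it is correct. The event equality via a case split on the stopping time $\tau^{\bm{x}_0}(\bm{w})=\inf\{t:\bm{X}_{\bm{x}_0}^{\bm{w}}(t)\in\partial\mathcal{X}\}$ is exactly the intended route, and you correctly isolate the two facts that make it work: $\partial\mathcal{X}\cap\mathcal{X}_s=\emptyset$ because $\mathcal{X}$ is open and $\mathcal{X}_s\subseteq\mathcal{X}$, and $\widetilde{\bm{X}}_{\bm{x}_0}^{\bm{w}}(T)\in\partial\mathcal{X}$ whenever $\tau^{\bm{x}_0}(\bm{w})\leq T$ by path continuity and closedness of $\partial\mathcal{X}$. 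The only point you leave implicit (as does the paper) is that a trajectory with $T^{\bm{x}_0}(\bm{w})\leq T$ must first exit the bounded set $\mathcal{X}$, so the stopped process and the event in Definition \ref{un_proII} are both well-behaved on such sample paths; this is harmless.
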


The reachability probability $\mathbb{P}_{\bm{x}_0}^{T}$ can also be reduced to a solution to a second-order partial differential equation. 
\begin{proposition}
\label{eqII}
    Suppose there exists a function $v(t,\bm{x})\colon [0,T]\times \mathbb{R}^n$ which is twice continuously differentiable over $\bm{x}$ and continuously differentiable over $t$, satisfying 
    \begin{equation}
    \label{eq_proII}
        \begin{cases}
            \widetilde{\mathcal{L}}v(t,\bm{x})=0, & \forall \bm{x}\in \overline{\mathcal{X}},\forall t \in [0,T].\\
            v(T,\bm{x})=1_{\mathcal{X}_s}(\bm{x}), & \forall \bm{x}\in \overline{\mathcal{X}},
        \end{cases}
    \end{equation}
then
$\mathbb{P}_{\bm{x}_0}^{T}=\mathbb{P}(\widetilde{\bm{X}}_{\bm{x}_0}^{\bm{w}}(T)\in \mathcal{X}_s)=v(0,\bm{x}_0)$ for $\bm{x}_0\in \mathcal{X}\setminus \mathcal{X}_s$.    
\end{proposition}
\begin{proof}
The proof is similar to Proposition \ref{eq}. 
\end{proof}

\subsection{Upper-bounding Reachability Probabilities}
\label{subsec:uupII}
In this subsection, we present barrier functions for upper-bounding the reachability probability $\mathbb{P}_{\bm{x}_0}^{T}$ in Definition \ref{un_proII}. The first time-dependent barrier function is obtained via relaxing equation \eqref{eq_proII} in Proposition \ref{eqII}. The second one extends upon the first one, which relaxes the supermartingale requirement. The third one is a variant of the second one, using a time-independent function $v(\bm{x})$ instead of a time-dependent one $v(t,\bm{x})$. They are respectively formulated in Corollary \ref{coro1II},  Theorem \ref{coro2II}, and Corollary \ref{coro3II}.

\begin{corollary}
\label{coro1II}
    Suppose there exists a barrier function $v(t,\bm{x})\colon [0,T]\times \mathbb{R}^n\rightarrow \mathbb{R}$ that is continuously differentiable over $t$ and twice continuously differentiable over $\bm{x}$, satisfying 
   \begin{equation}
   \label{lower_bound_condition1II}
        \begin{cases}
            \mathcal{L}v(t,\bm{x})\leq 0, & \forall \bm{x}\in \mathcal{X},\forall t \in [0,T],\\
            \frac{\partial v(t,\bm{x})}{\partial t}\leq 0, & \forall \bm{x}\in \partial \mathcal{X}, \forall t\in [0,T],\\ 
            v(T,\bm{x})\geq 1_{\mathcal{X}_s}(\bm{x}), & \forall \bm{x}\in \overline{\mathcal{X}},
        \end{cases}
    \end{equation}
then
$\mathbb{P}_{\bm{x}_0}^{T}\leq v(0,\bm{x}_0)$ for $\bm{x}_0\in \mathcal{X}\setminus \mathcal{X}_s$.    
\end{corollary}
\begin{proof}
From \eqref{lower_bound_condition1II}, we have 
 \begin{equation}
 \label{lower_bound_condition11II}
        \begin{cases}
            \widetilde{\mathcal{L}}v(t,\bm{x})\leq 0, & \forall \bm{x}\in \overline{\mathcal{X}},\forall t \in [0,T],\\
            v(T,\bm{x})\geq 1_{\mathcal{X}_s}(\bm{x}), & \forall \bm{x}\in \overline{\mathcal{X}}.
        \end{cases}
    \end{equation}

Then, we can obtain the conclusion via using Lemma \ref{equivII} and following the proof of Corollary \ref{coro1}.  
\end{proof}

Corollary \ref{coro1II} states that if there exists a barrier function $v(t,\bm{x})\colon [0,T]\times \mathbb{R}^n\rightarrow \mathbb{R}$ satisfying \eqref{lower_bound_condition1II}, then the reachability probability $\mathbb{P}_{\bm{x}_0}^{T}$ can be bounded above by $v(0,\bm{x}_0)$. Analogously, the requirement for $\mathcal{L}v(t,\bm{x})\leq 0$ to hold for $(t,\bm{x}) \in [0,T]\times \mathcal{X}$ may hinder the acquisition of such a barrier function. We will relax this requirement below.

\begin{theorem}
\label{coro2II}
    Suppose there exists a barrier function $v(t,\bm{x})\colon [0,T]\times \mathbb{R}^n\rightarrow \mathbb{R}$ that is continuously differentiable over $t$ and twice continuously differentiable over $\bm{x}$, satisfying 
 \begin{equation}
 \label{lower_bound_condition2II}
     \begin{cases}
        \mathcal{L}v(t,\bm{x})\leq \alpha v(t,\bm{x})+\beta, &\forall \bm{x}\in \mathcal{X}, \forall t\in [0,T],\\
        \frac{\partial v(t,\bm{x})}{\partial t} \leq \alpha v(t,\bm{x})+\beta, & \forall \bm{x}\in \partial \mathcal{X}, \forall t\in [0,T],\\
        v(T,\bm{x})\geq 1_{\mathcal{X}_s}(\bm{x}), & \forall \bm{x}\in \overline{\mathcal{X}},
    \end{cases}
\end{equation}
then
\begin{equation*}
\mathbb{P}_{\bm{x}_0}^{T}\leq
    \begin{cases}
      v(0,\bm{x}_0)+\beta T, & \text{if $\alpha=0$},\\
      e^{\alpha T}v(0,\bm{x}_0)+\frac{\beta}{\alpha}(e^{\alpha T}-1), & \text{if  $\alpha\neq 0$}
    \end{cases}
\end{equation*}
 for $\bm{x}_0\in \mathcal{X}\setminus \mathcal{X}_s$.    
\end{theorem}
\begin{proof}
From \eqref{lower_bound_condition2II}, we have 
 \begin{equation*}
        \begin{cases}
            \widetilde{\mathcal{L}}v(t,\bm{x})\leq \alpha v(t,\bm{x})+\beta, & \forall \bm{x}\in \overline{\mathcal{X}}, \forall t \in [0,T],\\
            v(T,\bm{x})\geq 1_{\mathcal{X}_s}(\bm{x}),& \forall \bm{x}\in \overline{\mathcal{X}}.
        \end{cases}
    \end{equation*}

Then, we can obtain the conclusion via using Lemma \ref{equivII} and following the proof of Corollary \ref{coro2}.  
\end{proof}

A straightforward result is obtained from Theorem \ref{coro2II} when searching for a time-independent barrier function $v(\bm{x})\colon \mathbb{R}^n \rightarrow \mathbb{R}$ instead of a time-dependent one $v(t,\bm{x})\colon [0,T]\times \mathbb{R}^n \rightarrow \mathbb{R}$. 
\begin{corollary}
\label{coro3II}
    Suppose there exists a barrier function $v(\bm{x})\colon \mathbb{R}^n\rightarrow \mathbb{R}$, which is twice continuously differentiable, satisfying 
      \begin{equation}
    \label{lower_bound_3II}
        \begin{cases}
            \mathcal{L}v(\bm{x})\leq \alpha v(\bm{x})+\beta, & \forall \bm{x}\in \mathcal{X}, \\
            0\leq \alpha v(\bm{x})+\beta,& \forall \bm{x}\in \partial \mathcal{X},\\
            v(\bm{x})\geq 1_{\mathcal{X}_s}(\bm{x}),& \forall \bm{x}\in \overline{\mathcal{X}},
        \end{cases}
    \end{equation} 
then
\begin{equation*}
\mathbb{P}_{\bm{x}_0}^{T}\leq
    \begin{cases}
      v(\bm{x}_0)+\beta T, & \text{if $\alpha=0$},\\
      e^{\alpha T}v(\bm{x}_0)+\frac{\beta}{\alpha}(e^{\alpha T}-1), &\text{if  $\alpha\neq 0$}
    \end{cases}
\end{equation*}
 for $\bm{x}_0\in \mathcal{X}\setminus \mathcal{X}_s$.    
\end{corollary}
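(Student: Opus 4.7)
The plan is to derive Corollary \ref{coro3II} as an immediate specialization of Lemma \ref{coro2II} by viewing the time-independent candidate $v(\bm{x})$ as a (trivially) time-dependent function $\tilde{v}(t,\bm{x}) := v(\bm{x})$ on $[0,T]\times \mathbb{R}^n$. Since $v$ is twice continuously differentiable in $\bm{x}$ and $\tilde v$ is constant in $t$, it inherits the smoothness hypotheses required by Lemma \ref{coro2II}.

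Next I would verify that $\tilde v$ meets each of the three conditions in \eqref{lower_bound_condition2II}. For the first, the infinitesimal generator reduces to $\mathcal{L}\tilde v(t,\bm{x}) = \mathcal{L}v(\bm{x})$ because $\partial \tilde v/\partial t \equiv 0$, so the inequality $\mathcal{L}v(\bm{x})\leq \alpha v(\bm{x}) + \beta$ on $\mathcal{X}$ from \eqref{lower_bound_3II} gives exactly $\mathcal{L}\tilde v(t,\bm{x}) \leq \alpha \tilde v(t,\bm{x}) + \beta$ on $\mathcal{X}\times [0,T]$. For the second, $\partial \tilde v/\partial t = 0$ on $\partial \mathcal{X}\times [0,T]$, and the middle line of \eqref{lower_bound_3II} asserts $0 \leq \alpha v(\bm{x}) + \beta$ on $\partial \mathcal{X}$, so $\partial \tilde v/\partial t \leq \alpha \tilde v(t,\bm{x}) + \beta$ there. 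For the third, $\tilde v(T,\bm{x}) = v(\bm{x}) \geq 1_{\mathcal{X}_s}(\bm{x})$ on $\overline{\mathcal{X}}$ follows directly from the last line of \eqref{lower_bound_3II}.

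Having verified the hypotheses, I would invoke Lemma \ref{coro2II} to conclude that, for $\bm{x}_0 \in \mathcal{X}\setminus\mathcal{X}_s$,
\[
\mathbb{P}_{\bm{x}_0}^{T} \leq
\begin{cases}
\tilde v(0,\bm{x}_0) + \beta T & \text{if } \alpha = 0,\\
e^{\alpha T}\tilde v(0,\bm{x}_0) + \frac{\beta}{\alpha}(e^{\alpha T}-1) & \text{if } \alpha \neq 0,
\end{cases}
\]
and since $\tilde v(0,\bm{x}_0) = v(\bm{x}_0)$, this is exactly the claimed bound.

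There is essentially no obstacle here; the only step requiring any care is the bookkeeping check that condition \eqref{lower_bound_3II} at the boundary, which is phrased as $0 \leq \alpha v(\bm{x}) + \beta$ rather than as a bound on $\partial \tilde v/\partial t$, is the correct translation of the time-dependent boundary inequality in \eqref{lower_bound_condition2II} once $\tilde v$ is taken independent of $t$. Because $\partial \tilde v/\partial t$ vanishes identically, this translation is automatic, and the proof collapses to a one-line appeal to Lemma \ref{coro2II}.
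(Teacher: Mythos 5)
Your proposal is correct and matches the paper's intent exactly: the paper states Corollary \ref{coro3II} as a ``straightforward result'' obtained from Lemma \ref{coro2II} by taking the barrier function to be time-independent, which is precisely the specialization $\tilde v(t,\bm{x}):=v(\bm{x})$ you carry out. Your explicit verification that $\partial \tilde v/\partial t\equiv 0$ turns the boundary condition of \eqref{lower_bound_condition2II} into $0\leq \alpha v(\bm{x})+\beta$ is the only nontrivial bookkeeping, and it is done correctly.
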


\subsection{Lower-bounding Reachability Probabilities}
\label{subsec:lupII}
In this subsection, we present our barrier functions for lower-bounding the reachability probability $\mathbb{P}_{\bm{x}_0}^{T}$, based on the following assumption  on $\mathcal{X}_s$. 

\begin{assumption}
\label{inter-assum}
The set $\mathcal{X}_s$ has non-empty interior, i.e., $\mathcal{X}_s^\circ \neq \emptyset$.
\end{assumption}

This assumption is necessary to ensure the possibility of a non-trivial (positive) lower bound on $\mathbb{P}_{\boldsymbol{x}_0}^{T}$ in the following constructed barrier functions. The conditions for lower-bounding $\mathbb{P}_{\boldsymbol{x}_0}^{T}$ require a barrier function $v(t, \boldsymbol{x})$ to satisfy $v(T, \boldsymbol{x}) \leq \mathbf{1}_{\mathcal{X}_s}(\boldsymbol{x})$ for all $\boldsymbol{x} \in \overline{\mathcal{X}}$.
\begin{itemize}
    \item If $\mathcal{X}_s$ has no interior (e.g., it is a single point or a lower-dimensional manifold), then $1_{\mathcal{X}_s}(\boldsymbol{x}) = 0$ for almost every $\boldsymbol{x} \in \overline{\mathcal{X}}$.
    \item Given the continuity of $v(T, \boldsymbol{x})$, the condition would then simplify to $v(T, \boldsymbol{x}) \leq 0$ everywhere.
    \item This, in combination with the generator conditions, would typically force the obtained lower bounds to be non-positive.
\end{itemize}

Like the one in Corollary \ref{coro1II}, the construction of the first barrier function was obtained by relaxing equation \eqref{eq_proII}.

\begin{corollary}
\label{coro4II}
    Suppose there exists a barrier function $v(t,\bm{x})\colon [0,T]\times \mathbb{R}^n\rightarrow \mathbb{R}$ that is continuously differentiable over $t$ and twice continuously differentiable over $\bm{x}$, satisfying 
   \begin{equation}
   \label{upper_bound_1II}
        \begin{cases}
            \mathcal{L}v(t,\bm{x})\geq 0, & \forall \bm{x}\in \mathcal{X},\forall t \in [0,T],\\
            \frac{\partial v(t,\bm{x})}{\partial t}\geq 0, & \forall \bm{x}\in \partial \mathcal{X}, \forall t\in [0,T],\\ 
            v(T,\bm{x})\leq 1_{\mathcal{X}_s}(\bm{x}), & \forall \bm{x}\in \overline{\mathcal{X}},
        \end{cases}
    \end{equation}
then
$\mathbb{P}_{\bm{x}_0}^{T}\geq v(0,\bm{x}_0)$ for $\bm{x}_0\in \mathcal{X}\setminus \mathcal{X}_s$.    
\end{corollary}
\begin{proof}
From \eqref{upper_bound_1II}, we have 
 \begin{equation*}
        \begin{cases}
            \widetilde{\mathcal{L}}v(t,\bm{x})\geq 0, & \forall \bm{x}\in \overline{\mathcal{X}},\forall t \in [0,T],\\
            v(T,\bm{x})\leq 1_{\mathcal{X}_s}(\bm{x}), & \forall \bm{x}\in \overline{\mathcal{X}}.
        \end{cases}
    \end{equation*}
    
Then, we can obtain the conclusion via using Lemma \ref{equivII} and following the proof of Corollary \ref{coro1}.  
\end{proof}

Based on the stochastic process $\widetilde{\bm{X}}_{\bm{x}_0}^{\bm{w}}(\cdot)$, one might wander the possibility of constructing a barrier function, similar to the one in Corollary \ref{coro4II}, to lower-bound $\mathbb{P}_{\bm{x}_0}^{T}$. Let us consider the existence of a barrier function $v(t,\bm{x})\colon [0,T]\times \mathbb{R}^n\rightarrow \mathbb{R}$ that is twice continuously differentiable with respect to $\bm{x}$ and continuously differentiable over $t$, satisfying 
   \begin{equation}
   \label{upper_bound_1III}
        \begin{cases}
            \mathcal{L}v(t,\bm{x})\geq 0, & \forall \bm{x}\in \mathcal{X},\forall t \in [0,T],\\
            \frac{\partial v(t,\bm{x})}{\partial t}\geq 0, & \forall \bm{x}\in \mathcal{X}_s,\forall t \in [0,T],\\
            \frac{\partial v(t,\bm{x})}{\partial t}\geq 0, & \forall \bm{x}\in \partial \mathcal{X}, \forall t\in [0,T],\\ 
            v(T,\bm{x})\leq 1_{\mathcal{X}_s}(\bm{x}), & \forall \bm{x}\in \overline{\mathcal{X}}.
        \end{cases}
    \end{equation}
 We can conclude that $\mathbb{P}_{\bm{x}_0}^{[0,T]}\geq v(0,\bm{x}_0)$ for $\bm{x}_0\in \mathcal{X}\setminus \mathcal{X}_s$. However, it is observed that if $v(\bm{x})$ satisfies \eqref{upper_bound_1III}, it also satisfies \eqref{upper_bound_1} with $w(\bm{x})\equiv 0$.  As commented in Remark \ref{Remark1}, we cannot obtain meaningful results.

Similar to Theorem \ref{coro2II}, a condition that relaxes the submartingale requirement (i.e., $\mathcal{L}v(t,\bm{x})\geq 0, \forall \bm{x}\in \mathcal{X},\forall t \in [0,T]$) in Proposition \ref{coro4II} is formulated in Theorem \ref{coro5II}.

\begin{theorem}
\label{coro5II}
    Suppose there exist a barrier function $v(t,\bm{x})\colon [0,T]\times \mathbb{R}^n\rightarrow \mathbb{R}$ that is continuously differentiable over $t$ and twice continuously differentiable over $\bm{x}$, satisfying 
   \begin{equation}
      \label{upper_bound_2II}
        \begin{cases}
            \mathcal{L}v(t,\bm{x})\geq \alpha v(t,\bm{x})+\beta, &\forall \bm{x}\in \mathcal{X},\forall t \in [0,T],\\
            \frac{\partial v(t,\bm{x})}{\partial t}\geq \alpha v(t,\bm{x})+\beta,& \forall \bm{x}\in \partial \mathcal{X}, \forall t\in [0,T],\\ 
            v(T,\bm{x})\leq 1_{\mathcal{X}_s}(\bm{x}),& \forall \bm{x}\in \overline{\mathcal{X}},
        \end{cases}
    \end{equation}
then
\begin{equation*}
\mathbb{P}_{\bm{x}_0}^{T}\geq 
\begin{cases}
 e^{\alpha T}v(0,\bm{x}_0)+\frac{\beta}{\alpha}(e^{\alpha T}-1), &\text{if $\alpha\neq 0$},\\
v(0,\bm{x}_0)+\beta T, & \text{if $\alpha=0$}
\end{cases}
\end{equation*}
for $\bm{x}_0\in \mathcal{X}\setminus \mathcal{X}_s$.    
\end{theorem}
\begin{proof}
From \eqref{upper_bound_2II}, we have
 \begin{equation*}
        \begin{cases}
            \widetilde{\mathcal{L}}v(t,\bm{x})\geq \alpha v(t,\bm{x})+\beta, &\forall \bm{x}\in \overline{\mathcal{X}},\forall t \in [0,T],\\
            v(T,\bm{x})\leq 1_{\partial \mathcal{X}_s}(\bm{x}), & \forall \bm{x}\in \overline{\mathcal{X}}.
        \end{cases}
    \end{equation*}

If $\alpha\neq 0$, according to $\widetilde{\mathcal{L}}v(t,\bm{x})\geq \alpha v(t,\bm{x})+\beta, \forall \bm{x}\in \overline{\mathcal{X}},\forall t \in [0,T]$, we have, for $t\in [0,T]$, that 
\begin{equation*}
\begin{split}
    \mathbb{E}[v(t,\widetilde{\bm{X}}_{\bm{x}_0}^{\bm{w}}(t))]\geq e^{\alpha t}v(0,\bm{x}_0)+\frac{\beta}{\alpha}(e^{\alpha t}-1).
    \end{split}
    \end{equation*}

Further, from  $v(T,\bm{x})\leq 1_{\partial \mathcal{X}_s}(\bm{x}), \forall \bm{x}\in \overline{\mathcal{X}}$, we have 
\begin{equation*}
\begin{split}
\mathbb{P}(\widetilde{\bm{X}}_{\bm{x}_0}^{\bm{w}}(T)\in \mathcal{X}_s)&=\mathbb{E}[1_{\mathcal{X}_s}(\widetilde{\bm{X}}_{\bm{x}_0}^{\bm{w}}(T))]\geq  \mathbb{E}[v(T,\widetilde{\bm{X}}_{\bm{x}_0}^{\bm{w}}(T))]\\
&\geq e^{\alpha T}v(0,\bm{x}_0)+\frac{\beta}{\alpha}(e^{\alpha T}-1).
  \end{split}
\end{equation*}

If $\alpha=0$, we can obtain $\mathbb{P}_{\bm{x}_0}^{T}\geq v(0,\bm{x}_0)+\beta T$ by following the above procedure. 

The proof is completed. 
\end{proof}

\begin{remark}
    Theorem \ref{coro5II} provides a time-dependent barrier function for lower-bounding $\mathbb{P}_{\boldsymbol{x}{0}}^{T}$. A natural follow-up step would be to seek a time-independent version of this result, analogous to Corollaries \ref{coro3} and \ref{coro3II}. Such a formulation 
    \begin{equation}
      \label{upper_bound_3II}
        \begin{cases}
            \mathcal{L}v(\bm{x})\geq \alpha v(\bm{x})+\beta, &\forall \bm{x}\in \mathcal{X},\\
            0\geq \alpha v(\bm{x})+\beta, & \forall \bm{x}\in \partial \mathcal{X},\\ 
            v(\bm{x})\leq 1_{\mathcal{X}_s}(\bm{x}), & \forall \bm{x}\in \overline{\mathcal{X}},
        \end{cases}
    \end{equation}
leads to 
\[
\mathbb{P}_{\bm{x}_0}^{T}\geq 
\begin{cases}
e^{\alpha T}v(\bm{x}_0)+\frac{\beta}{\alpha}(e^{\alpha T}-1), & \text{if $\alpha\neq 0$},\\
v(\bm{x}_0)+\beta T, & \text{if $\alpha=0$}
\end{cases}
\]
for $\bm{x}_0\in \mathcal{X}\setminus \mathcal{X}_s$. However, these bounds will be  trivial (non-positive). For the detailed explanation, please refer to Appendix \ref{section:expla}. Therefore, a time-independent barrier function is not a viable approach for this specific reachability problem. 
\end{remark}

\section{Examples}
\label{sec:ex}
In this section, we assess the effectiveness of the proposed barrier certificates in bounding reachability probabilities through numerical examples. As a benchmark, we compute empirical estimates of the exact reachability probabilities using $10^4$ Monte Carlo simulations with the Euler--Maruyama method. All conditions \eqref{fen2020}, \eqref{lower_bound_condition2},  \eqref{lower_bound_3},\eqref{lower_bound_4}, \eqref{upper_bound_2}, \eqref{upper_bound_3}, \eqref{lower_bound_condition2II}, \eqref{lower_bound_3II}, and \eqref{upper_bound_2II} are encoded into semidefinite programs via SOS decomposition \cite{papachristodoulou2005tutorial}, solved with MOSEK 10.2 \cite{aps2019mosek}. To avoid the bilinearity that would arise from jointly optimizing over both the parameter $\alpha$ and the barrier function $v$, we adopt a pragmatic strategy: we carry out a simple one-dimensional exploration by manually trying different values of $\alpha$. For each fixed $\alpha$, the problem of finding a suitable barrier function $v(t,\bm{x})$ or $v(\bm{x})$ becomes a convex optimization problem. Unless otherwise stated, we utilize polynomial barrier functions of degree $d$, meaning they include all monomials with total degree less than or equal to $d$.
%To avoid bilinearities, we fix $\alpha$ by performing a sweep over a range and search only for barrier functions $v(t,\bm{x})$ or $v(\bm{x})$. Unless otherwise stated, a polynomial of degree $d$ includes all monomials up to degree $d$. 

\begin{example}[Population Growth Model]
\label{ex1}
Consider the following system:
\[
dX(t,w)=b(X(t,w))dt+\sigma(X(t,w))dW(t,w),
\]
with state constraint $\mathcal{X}=\{x\in\mathbb{R}\mid x^2-1<0\}$ and safe set $\mathcal{X}_s=\{x\in\mathbb{R}\mid 100x^2-1\leq 0\}$.

\textbf{Case 1.} $b(x)=-x$, $\sigma(x)=\tfrac{\sqrt{2}}{2}x$, $T=100$, $x_0=-0.8$. Fig.~\ref{fig:ex11_case1} illustrates trajectories in the $x$--$t$ space. Computed bounds are summarized in Table~\ref{tab:ex1_case1}.  

\begin{figure}[h]
    \centering
    \includegraphics[width=0.5\linewidth]{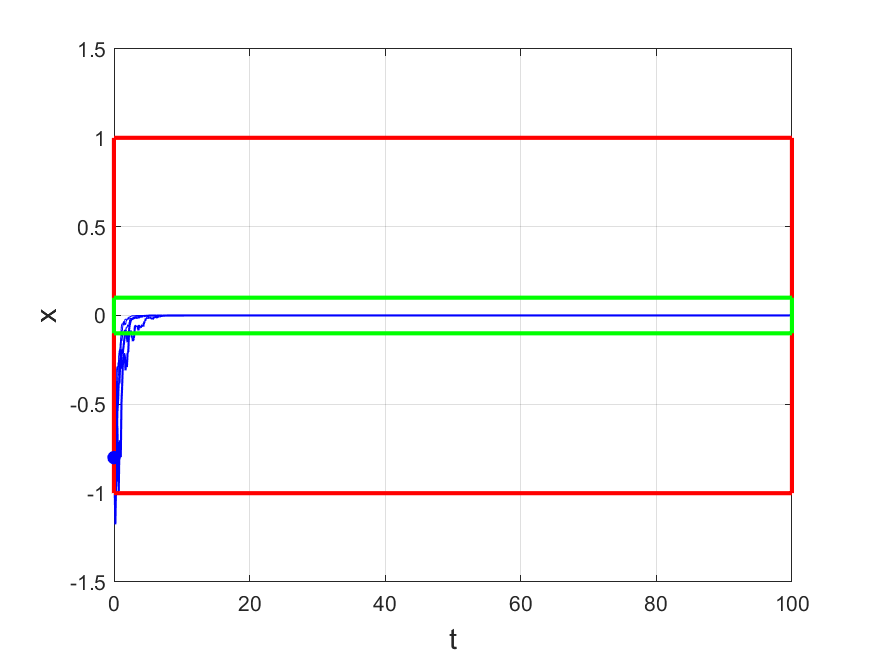}
    \caption{Case 1 of Example~1. 
    The region bounded by the red curve represents the safe set over the horizon $[0,100]$, 
    the region bounded by the green curve represents the target set over the horizon $[0,100]$, 
    and the blue curves represent five simulated trajectories starting from the initial state $x_0 = -0.8$.}
    \label{fig:ex11_case1}
\end{figure}

\begin{table}[H]
\centering
\caption{Bounds for Example~\ref{ex1}, Case 1 ($b(x)=-x$, $\sigma(x)=\tfrac{\sqrt{2}}{2}x$, $T=100$, $x_0=-0.8$).}
\label{tab:ex1_case1}
\begin{tabular}{lccc}
\hline
Condition & $\alpha$ & Degree & Bound \\
\hline
%\eqref{fen2020} &0& 6 in $x$, 1 in $t$  & $0.6723$ \\
\eqref{lower_bound_condition2} & 0 &  6 in $x$, 1 in $t$ & 0.6723\\
\eqref{lower_bound_3} & $0$ & 6 & $0.6723$ \\
%\eqref{lower_bound_4} & $0$ & 6 & $0.6723$ \\
\eqref{upper_bound_3} & $0$ & 14 & $0.6423$ \\
\eqref{lower_bound_condition2II} & $0$ & 6 in $x$, 1 in $t$ & $0.6723$ \\
\eqref{upper_bound_2II} & $0$ & 14 in $x$, 1 in $t$ & $0.6480$ \\
\hline
Monte Carlo & --- & --- & $\mathbb{P}_{x_0}^{[0,T]}=0.6556$ \\
Monte Carlo & --- & --- & $\mathbb{P}_{x_0}^{T}=0.6556$ \\
\hline
\end{tabular}
\end{table}

\emph{Remark.} The bounds obtained from time-independent conditions are already tight; time-dependent barrier functions give comparable results but sometimes suffer from numerical instability.

\textbf{Case 2.} $b(x)=-10x$, $\sigma(x)=\tfrac{\sqrt{2}}{2}x$, $T=1$, $x_0=-0.8$.  Fig.~\ref{fig:ex11_case2} illustrates trajectories in the $x$--$t$ space. Table~\ref{tab:ex1_case2} shows the computed bounds.

\begin{figure}[h]
    \centering
    \includegraphics[width=0.5\linewidth]{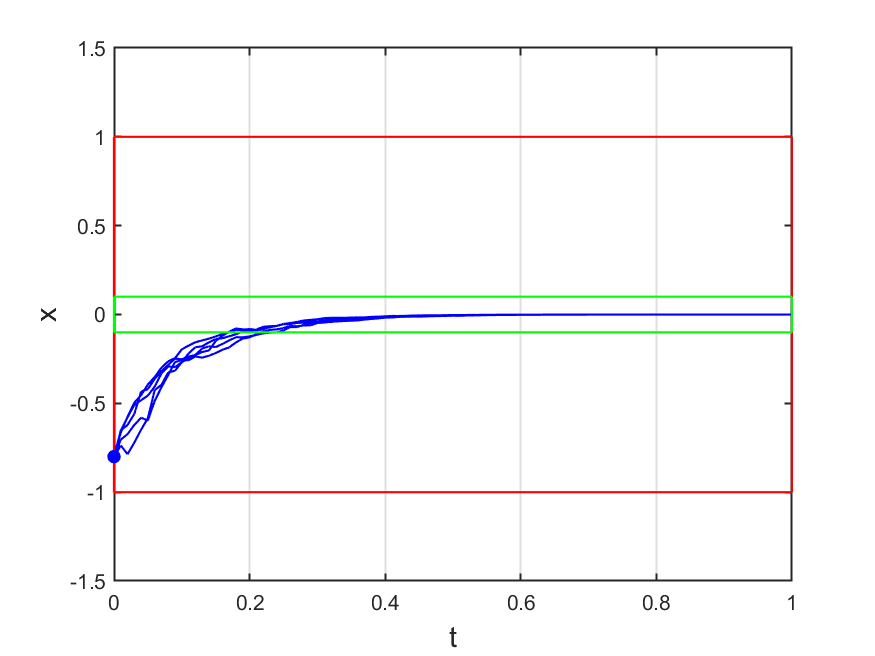}
    \caption{Case 2 of Example~1. 
    The region bounded by the red curve represents the safe set over the horizon $[0,1]$, 
    the region bounded by the green curve represents the target set over the horizon $[0,1]$, 
    and the blue curves represent five simulated trajectories starting from the initial state $x_0 = -0.8$.}
    \label{fig:ex11_case2}
\end{figure}

\begin{table}[H]
\centering
\caption{Bounds for Example~\ref{ex1}, Case 2 ($b(x)=-10x$, $\sigma(x)=\tfrac{\sqrt{2}}{2}x$, $T=1$, $x_0=-0.8$).}
\label{tab:ex1_case2}
\begin{tabular}{lccc}
\hline
Condition & $\alpha$ & Degree & Bound \\
\hline
\eqref{upper_bound_2} & $0$ & 8 & $0.7253$ \\
\eqref{upper_bound_3} & $0$ & 8 & $0.4912$ \\
\eqref{upper_bound_2II} & $0$ & 8 & $0.9052$ \\
\eqref{upper_bound_2} & $1$ & 8 & $0.7218$ \\
\eqref{upper_bound_3} & $1$ & 8 & $0.4774$ \\
\eqref{upper_bound_2II} & $1$ & 8 & $0.9349$ \\
\hline
Monte Carlo & --- & --- & $\mathbb{P}_{x_0}^{[0,T]}=1.0000$ \\
Monte Carlo & --- & --- & $\mathbb{P}_{x_0}^T=1.0000$\\
\hline
\end{tabular}
\end{table}

\emph{Remark.} 
While the time-dependent conditions \eqref{upper_bound_2} and \eqref{upper_bound_3} provided valid lower bounds (\(0.7253\) and \(0.4912\) with \(\alpha=0\)), even tighter lower bounds for \(\mathbb{P}_{x_0}^{[0,T]}\) were obtained by using the barrier function from \eqref{upper_bound_2II} designed for \(\mathbb{P}_{x_0}^{T}\), yielding a bound of \(0.9052\) (\(\alpha=0\)) and \(0.9349\) (\(\alpha=1.0\)). This conclusion applies to the case with $\alpha=1$ as well. This illustrates the potential for cross-application of the barrier certificates between the two reachability problems to achieve sharper results.

\textbf{Case 3.} $b(x)=-x+0.1$, $\sigma(x)=x^2$, $T=10$, $x_0=-0.5$.  
 Fig.~\ref{fig:ex11_case3} illustrates trajectories in the $x$--$t$ space.
Table~\ref{tab:ex1_case3} shows the computed bounds.  

\begin{figure}[h]
    \centering
    \includegraphics[width=0.5\linewidth]{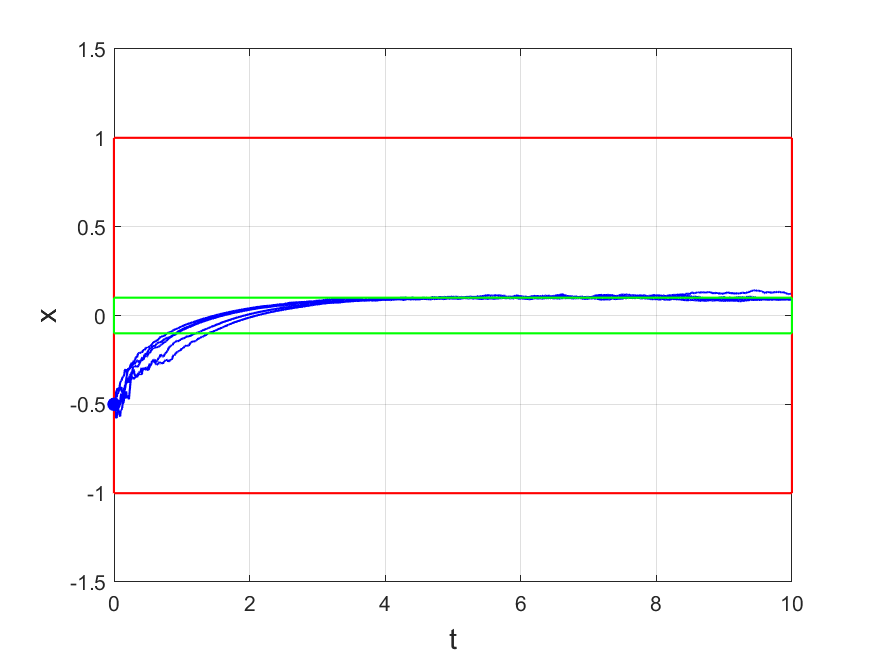}
    \caption{Case 3 of Example~1. 
    The region bounded by the red curve represents the safe set over the horizon $[0,10]$, 
    the region bounded by the green curve represents the target set over the horizon $[0,10]$, 
    and the blue curves represent five simulated trajectories starting from the initial state $x_0 = -0.5$.}
    \label{fig:ex11_case3}
\end{figure}

\begin{table}[H]
\centering
\caption{Bounds for Example~\ref{ex1}, Case 3 ($b(x)=-x+0.1$, $\sigma(x)=x^2$, $T=10$, $x_0=-0.5$).}
\label{tab:ex1_case3}
\begin{tabular}{lccc}
\hline
Condition & $\alpha$ & Degree & Bound \\
\hline
\eqref{upper_bound_2} & $0$ & 8 in $x$, 1 in $t$ & $0.6830$ \\
\eqref{upper_bound_3} & $0$ & 8 & $0.6723$ \\
\eqref{upper_bound_3} & $0$ & 14 & $0.7649$ \\
\eqref{upper_bound_2} & $0.01$ & 8 in $x$, 1 in $t$ & $0.7173$ \\
\eqref{upper_bound_3} & $0.01$ & 8 & $0.7074$ \\
\eqref{upper_bound_3} & $0.01$ & 14 & $0.8035$ \\
\hline
Monte Carlo & --- & --- & $\mathbb{P}_{x_0}^{[0,T]}=0.9842$ \\
Monte Carlo & --- & --- & $\mathbb{P}_{x_0}^T=0.4623$ \\
\hline
\end{tabular}
\end{table}

\emph{Remark.} For this nonlinear system, increasing the degree and adjusting $\alpha$ improves tightness, though numerical issues appear for higher-degree time-dependent barriers.
\end{example}

\begin{example}
\label{ex2}

Consider the following two-dimensional system:
\begin{equation*}
\begin{cases}
dx(t,w)=(-x(t,w)+1)dt+x(t,w)^2dW_1(t,w),\\
dy(t,w)=(10y(t,w)+x(t,w))dt-x(t,w)dW_2(t,w),
\end{cases}
\end{equation*}
with $\mathcal{X}=\{x^2+y^2\leq 1\}$, $\mathcal{X}_s=\{x^2+y^2\leq 0.01\}$, $x_0=(-0.5,0.5)$, $T=1$.  

%Monte Carlo: $\mathbb{P}_{x_0}^{[0,T]}=0$, $\mathbb{P}_{x_0}^T=0$.  

Fig.~\ref{fig:ex2} illustrates trajectories in the $x$--$y$ space. Computed bounds are shown in Table~\ref{tab:ex2}.  

\begin{figure}[h]
    \centering
    \includegraphics[width=0.4\linewidth]{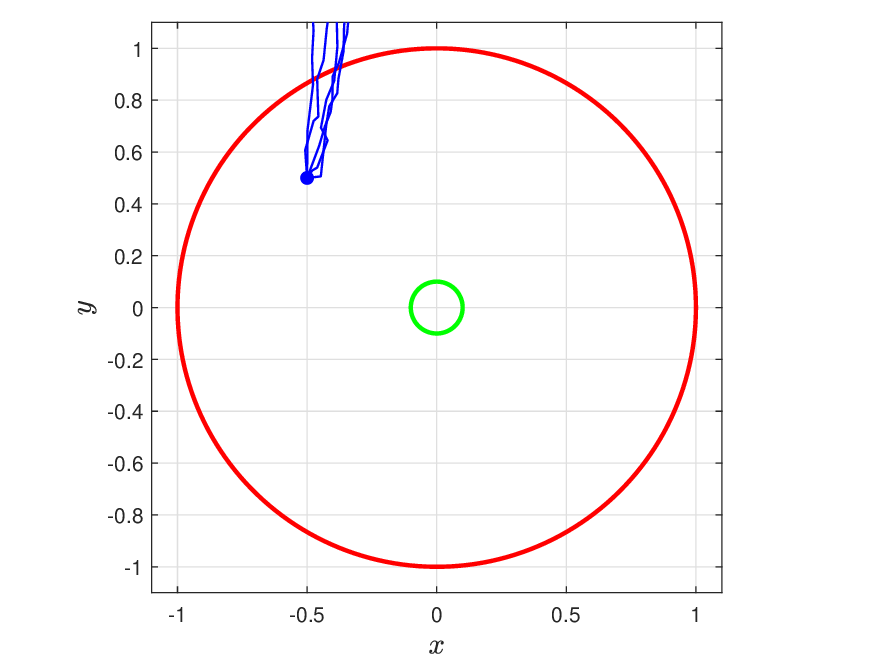}
    \caption{ Example~2. 
    The region bounded by the red curve represents the safe set, 
    the region bounded by the green curve represents the target set, 
    and the blue curves represent five simulated trajectories starting from the initial state $(x_0,y_0)^{\top}=(-0.5,0.5)^{\top}$.}
    \label{fig:ex2}
\end{figure}

\begin{table}[H]
\centering
\caption{Bounds for Example~\ref{ex2}.}
\label{tab:ex2}
\begin{tabular}{lccc}
\hline
Condition & $\alpha$ & Degree & Bound \\
\hline
\eqref{fen2020} & --- & 4 & $0.3317$ \\
\eqref{lower_bound_condition2} & $0$ & 4 & $0.3365$ \\
\eqref{lower_bound_3} & $0$ & 4 & $0.3384$ \\
\eqref{lower_bound_4} & $0$ & 4 & $0.3384$ \\
\eqref{lower_bound_condition2II} & $0$ & 4 & $0.1465$ \\
\eqref{lower_bound_3II} & $0$ & 4 & $0.3384$ \\
\eqref{lower_bound_condition2} & $3.0$ & 4 & $0.2292$ \\
\eqref{lower_bound_3} & $3.0$ & 4 & $1.0000$ \\
\eqref{lower_bound_condition2II} & $2.0$ & 4 & $0.0670$ \\
\eqref{lower_bound_3II} & $2.0$ & 4 & $0.9735$ \\
\hline
Monte Carlo & --- & --- & $\mathbb{P}_{x_0}^{[0,T]}=0.0000$ \\
Monte Carlo & --- & --- & $\mathbb{P}_{x_0}^{T}=0.0000$ \\
\hline
\end{tabular}
\end{table}

\emph{Remark.} 
Using time-dependent barrier functions (condition~\eqref{lower_bound_condition2}, particularly with $\alpha = 3.0$) yields tighter upper bounds than the time-independent counterpart (condition~\eqref{lower_bound_3}) or existing methods (conditions~\eqref{fen2020} and \eqref{lower_bound_4} with $\alpha = 0$). For condition~\eqref{lower_bound_4} with $\alpha<0$, we did not obtain any meaningful upper bounds. In addition, the barrier in~\eqref{lower_bound_condition2II} with $\alpha=2.0$, designed for the specific time-instant problem, provide the tightest upper bound $0.0670$ for $\mathbb{P}_{\boldsymbol{x}_0}^{T}$, which is close to the empirical estimate $0$ obtained from the Monte Carlo method.

\end{example}

%\subsection{Discussion on Scalability and Numerical Issues}

The above examples demonstrate that:
\begin{itemize}
   \item \textbf{Comparative Advantage:} The new conditions often provide tighter bounds (e.g., Example 2) and more flexible stability ranges for parameters (e.g., \(\alpha \neq 0\)) compared to previous methods, e.g., \eqref{fen2020} and \eqref{lower_bound_4}.
    \item \textbf{Cross-Application:} Barrier functions designed for one type of reachability probability (e.g., \(\mathbb{P}_{\boldsymbol{x}_0}^{T}\)) can sometimes be effectively used to bound the other type (e.g., \(\mathbb{P}_{\boldsymbol{x}_0}^{[0,T]}\)), providing a useful strategy for obtaining tighter results.
    \item \textbf{Parameter Tuning:} The parameter \(\alpha\) in the conditions provides a degree of freedom that can be tuned to optimize the tightness of the bounds for a given system.
    \item \textbf{Computational Trade-offs:} Time-dependent barrier functions can yield sharper results but introduce an additional variable $t$, which increases the SDP size. This increased dimensionality often causes numerical instability for higher polynomial degrees, suggesting a practical trade-off between bound tightness and computational tractability.
\end{itemize}

\section{Conclusion}
\label{sec:con}
This paper introduced a  new framework of constructing  barrier functions to establish both upper and lower bounds on the probabilities of reaching specific sets over finite time horizons and at finite time instants in continuous-time stochastic systems described by SDEs. These proposed barrier functions offer stronger alternatives, complement existing methods, or fill gaps, facilitating the calculation of precise bounds on reachability probabilities.

In future work, we will develop advanced numerical methods for computing the proposed barrier functions. Moreover, this framework provides a unifying foundation for reachability analysis, as the stochastic barrier conditions naturally reduce to their deterministic counterparts when the diffusion term vanishes (see Appendix~\ref{app:deter}). A more detailed comparison with specialized deterministic methods will also be pursued.
\section*{Acknowledgements}
This work is funded by the CAS Pioneer Hundred Talents Program, Basic Research Program of  Institute of Software, CAS (Grant No. ISCAS-JCMS-202302), and NRF RSS Scheme NRF-RSS2022-009.

\bibliographystyle{plain}         
\bibliography{autosam}       

\section*{Appendix}

\subsection{Semi-definite Programming Implementation}

In this section we present the semi-definite programs to optimize the lower and upper bounds of the reachability probabilities $\mathbb{P}_{\bm{x}_0}^{[0,T]}$ and $\mathbb{P}_{\bm{x}_0}^{T}$. These semi-definite programs are constructed via encoding constraints 
\eqref{fen2020}, \eqref{lower_bound_condition2}, \eqref{lower_bound_3}, \eqref{upper_bound_2}, \eqref{upper_bound_3}, \eqref{lower_bound_condition2II}, \eqref{coro3II}, and \eqref{upper_bound_3II}  into semi-definite constraints using sum of squares decomposition techniques for
multivariate polynomials \cite{papachristodoulou2005tutorial}. 

The following notations are used:  $\mathbb{R}[\bm{x}]$ denotes the ring of all multivariate polynomials in a variable $\bm{x}$; $\sum[\bm{x}]$ is
used to represent the set of sum of squares polynomials over variables
$\bm{x}$, i.e., $\sum[\bm{x}] = \{p(\bm{x}) \in \mathbb{R}[\bm{x}] \mid  p(\bm{x}) =\sum_{i=1}^k q_i^2(\bm{x}), q_i\in\mathbb{R}[\bm{x}], i = 1,\ldots ,k\}$. In addition, we assume $\mathcal{X}=\{\bm{x}\in \mathbb{R}^n\mid h(\bm{x})\leq 0\}$ with $h(\bm{x}) \in \mathbb{R}[\bm{x}]$, and 
$\mathcal{X}_s=\{\bm{x}\in \mathbb{R}^n\mid g(\bm{x})\leq 0\}$ with $g(\bm{x}) \in \mathbb{R}[\bm{x}]$.

The semidefinite program for solving \eqref{fen2020} is:
\begin{equation*}
\begin{split}
&\textbf{SDP0}~~~~~~~~~~~~~\min  v(0,\bm{x}_0)\\
&\text{s.t.~}\\
&\begin{cases}
  -\mathcal{L}v(t,\bm{x}) +s_0(t,\bm{x})t(t-T)+s_1(t,\bm{x})h(\bm{x})-s_2(t,\bm{x})g(\bm{x})\in \sum[t,\bm{x}],\\
      -\frac{\partial v(t,\bm{x})}{\partial t}+ p(t,\bm{x})h(\bm{x}) +s_3(t,\bm{x})t(t-T) \in \sum[t,\bm{x}],\\
        v(t,\bm{x})+s_4(t,\bm{x})h(\bm{x})-s_5(t,\bm{x})g(\bm{x})+s_6(t,\bm{x})t(t-T)\in \sum[t,\bm{x}], \\
         v(t,\bm{x})-1+s_7(t,\bm{x})g(\bm{x})+s_8(t,\bm{x})t(t-T)\in \sum[t,\bm{x}]. 
\end{cases}
\end{split}
\end{equation*}
 The above semi-definite program is minimizing the objective via searching $(v(t,\bm{x}),p(t,\bm{x})\in \mathbb{R}[t,\bm{x}];s_i(t,\bm{x})\in \sum[t,\bm{x}],i=0,\ldots,8)$.

The semidefinite program for solving \eqref{lower_bound_condition2} is formulated below:
\begin{equation*}
\begin{split}
&\textbf{SDP1}~~~~~~~~~~~~~\min  obj\\
&\text{s.t.~}\begin{cases}
 \alpha v(t,\bm{x})+\beta -\mathcal{L}v(t,\bm{x}) +s_0(t,\bm{x})t(t-T)\\
 ~~~~~~~~~~~~~~+s_1(t,\bm{x})h(\bm{x})-s_2(t,\bm{x})g(\bm{x})\in \sum[t,\bm{x}],\\
        \alpha v(t,\bm{x})+\beta-\frac{\partial v(t,\bm{x})}{\partial t}+ p(t,\bm{x})h(\bm{x})\\
        ~~~~~~~~~~~~~~~~~~~~~~~~~~~~+s_3(t,\bm{x})t(t-T) \in \sum[t,\bm{x}],\\
          \alpha v(t,\bm{x})+\beta-\frac{\partial v(t,\bm{x})}{\partial t}+ q(t,\bm{x})g(\bm{x}) \\
          ~~~~~~~~~~~~~~~~~~~~~~~~~~~~+s_4(t,\bm{x})t(t-T) \in \sum[t,\bm{x}],\\
        v(T,\bm{x})+s_5(\bm{x})h(\bm{x})-s_6(\bm{x})g(\bm{x})\in \sum[\bm{x}] \\
         v(T,\bm{x})-1+r(\bm{x})g(\bm{x})\in \sum[\bm{x}],
\end{cases}
\end{split}
\end{equation*}
where $obj=v(0,\bm{x}_0)+\beta T~\text{if}~\alpha=0$ and $obj=
       e^{\alpha T}v(0,\bm{x}_0)+\frac{\beta}{\alpha}(e^{\alpha T}-1) ~\text{if}~ \alpha\neq 0$, and $\alpha$ is a user-specified value. The above semi-definite program is minimizing the objective via searching $(\beta;v(t,\bm{x})\in \mathbb{R}[t,\bm{x}];s_i(\bm{x})\in \sum[t,\bm{x}],i=0,\ldots,4;p(\bm{x}),q(\bm{x}),r(\bm{x})\in \mathbb{R}[\bm{x}];s_j(\bm{x}) \in \sum[\bm{x}],j=5,\ldots,6)$. 

The semidefinite program for solving \eqref{lower_bound_3} is formulated below: 
\begin{equation*}
\begin{split}
&\textbf{SDP2}~~~~~~~~~~~~~\min  obj\\
&\text{s.t.~}\begin{cases}
 \alpha v(\bm{x})+\beta -\mathcal{L}v(\bm{x})+s_1(\bm{x})h(\bm{x})-s_2(\bm{x})g(\bm{x})\in \sum[\bm{x}],\\
        \alpha v(\bm{x})+\beta+ p(\bm{x})h(\bm{x}) \in \sum[\bm{x}],\\
          \alpha v(\bm{x})+\beta+ q(\bm{x})g(\bm{x})  \in \sum[\bm{x}],\\
        v(\bm{x})+s_3(\bm{x})h(\bm{x})-s_4(\bm{x})g(\bm{x})\in \sum[\bm{x}], \\
         v(\bm{x})-1+r(\bm{x})g(\bm{x})\in \sum[\bm{x}],
\end{cases}
\end{split}
\end{equation*}
where $obj=v(\bm{x}_0)+\beta T~\text{if}~\alpha=0$ and $obj=
       e^{\alpha T}v(\bm{x}_0)+\frac{\beta}{\alpha}(e^{\alpha T}-1) ~\text{if}~ \alpha\neq 0$, and $\alpha$ is a user-specified value. This semi-definite program is minimizing the objective via searching $(\beta;s_i(\bm{x})\in \sum[\bm{x}],i=0,\ldots,4;v(\bm{x}),p(\bm{x}),q(\bm{x}),r(\bm{x})\in \mathbb{R}[\bm{x}])$.

    The semidefinite program for solving \eqref{upper_bound_2} is formulated below: 
\begin{equation*}
\begin{split}
&\textbf{SDP3}~~~~~~~~~~~~~\min  obj\\
&\text{s.t.~}\begin{cases}
  \mathcal{L}v(t,\bm{x})-\alpha v(t,\bm{x})-\beta +s_0(t,\bm{x})t(t-T)\\
 ~~~~~~~~~~~~+s_1(t,\bm{x})h(\bm{x})-s_2(t,\bm{x})g(\bm{x})\in \sum[t,\bm{x}],\\
        \frac{\partial v(t,\bm{x})}{\partial t}-\alpha v(t,\bm{x})-\beta+ p(t,\bm{x})h(\bm{x}) \\
        ~~~~~~~~~~~~~~~~~~~~~~~~~~~+s_3(t,\bm{x})t(t-T) \in \sum[t,\bm{x}],\\
          \frac{\partial v(t,\bm{x})}{\partial t}-\alpha v(t,\bm{x})-\beta+ q(t,\bm{x})g(\bm{x}) \\
          ~~~~~~~~~~~~~~~~~~~~~~~~~~~+s_4(t,\bm{x})t(t-T) \in \sum[t,\bm{x}],\\
          1+\frac{\partial w(t,\bm{x})}{\partial t}-v(t,\bm{x})+r(t,\bm{x})h(\bm{x})\\
          ~~~~~~~~~~~~~~~~~~~~~~~~~~~+s_5(t,\bm{x})t(t-T) \in \sum[t,\bm{x}],\\
          \mathcal{L}w(t,\bm{x})-v(t,\bm{x})+ s_6(t,\bm{x})t(t-T)\\
 ~~~~~~~~~~~~+s_7(t,\bm{x})h(\bm{x})-s_8(t,\bm{x})g(\bm{x})\in \sum[t,\bm{x}],\\
       \frac{\partial w(t,\bm{x})}{\partial t}-v(t,\bm{x})+s_9(t,\bm{x})t(t-T)\\
        ~~~~~~~~~~~~~~~~~~~~~~~~~~~~~~~~+l(t,\bm{x})h(\bm{x})\in \sum[t,\bm{x}], 
\end{cases}
\end{split}
\end{equation*}
where $obj=\frac{(\frac{1}{\alpha}v(0,\bm{x}_0)+\frac{\beta}{\alpha^2})(e^{\alpha T}-1)-\frac{\beta}{\alpha}T}{T} -\frac{2M}{T}~\text{if}~ \alpha\neq 0$ and $obj=v(0,\bm{x}_0)+\frac{1}{2}\beta T-\frac{2M}{T}~\text{if}~\alpha=0$, and $\alpha$ is a user-specified value. The above semi-definite program is minimizing the objective via searching $(\beta;s_i(t,\bm{x})\in \sum[t,\bm{x}],i=0,\ldots,9;v(t,\bm{x}),w(t,\bm{x}),p(t,\bm{x}),q(t,\bm{x}),r(t,\bm{x}), l(t,\bm{x})\in \mathbb{R}[t,\bm{x}]$. 

   The semidefinite program for solving \eqref{upper_bound_3} is formulated below: 
\begin{equation*}
\begin{split}
&\textbf{SDP4}~~~~~~~~~~~~~\min  obj\\
&\text{s.t.~}\begin{cases}
  \mathcal{L}v(\bm{x})-\alpha v(\bm{x})-\beta +s_0(\bm{x})h(\bm{x})-s_1(\bm{x})g(\bm{x})\in \sum[\bm{x}],\\
        -\alpha v(\bm{x})-\beta+ p(\bm{x})h(\bm{x}) \in \sum[\bm{x}],\\
          -\alpha v(\bm{x})-\beta+ q(\bm{x})g(\bm{x}) \in \sum[\bm{x}],\\
          1-v(\bm{x})+r(\bm{x})h(\bm{x}) \in \sum[\bm{x}],\\
          \mathcal{L}w(\bm{x})-v(\bm{x})+s_2(\bm{x})h(\bm{x})-s_3(\bm{x})g(\bm{x})\in \sum[\bm{x}],\\
       -v(\bm{x})+l(\bm{x})h(\bm{x})\in \sum[\bm{x}],
\end{cases}
\end{split}
\end{equation*}
where $obj=\frac{(\frac{1}{\alpha}v(0,\bm{x}_0)+\frac{\beta}{\alpha^2})(e^{\alpha T}-1)-\frac{\beta}{\alpha}T}{T} -\frac{2M}{T}~\text{if}~ \alpha\neq 0$ and $obj=v(0,\bm{x}_0)+\frac{1}{2}\beta T-\frac{2M}{T}~\text{if}~\alpha=0$, and $\alpha$ is a user-specified value. The above semi-definite program is minimizing the objective via searching $(\beta;s_i(\bm{x})\in \sum[\bm{x}],i=0,\ldots,3;v(\bm{x}),w(\bm{x}),p(\bm{x}),q(\bm{x}),r(\bm{x}), l(\bm{x})\in \mathbb{R}[\bm{x}]$.   

  The semidefinite program for solving \eqref{lower_bound_condition2II} is formulated below: 

  \begin{equation*}
\begin{split}
&\textbf{SDP5}~~~~~~~~~~~~~\min  obj\\
&\text{s.t.~}\begin{cases}
   \alpha v(t,\bm{x})+\beta-\mathcal{L}v(t,\bm{x})+s_0(t,\bm{x})h(\bm{x})+s_1(t,\bm{x})t(t-T)\in \sum[t,\bm{x}],\\
        \alpha v(t,\bm{x})+\beta-\frac{\partial v(t,\bm{x})}{\partial t}+ p(t,\bm{x})h(\bm{x})+s_2(t,\bm{x})t(t-T) \in \sum[t,\bm{x}],\\
          v(T,\bm{x})+s_3(\bm{x})h(\bm{x})-s_4(\bm{x})g(\bm{x})\in \sum[\bm{x}], \\
         v(T,\bm{x})-1+s_5(\bm{x})g(\bm{x})\in \sum[\bm{x}],
\end{cases}
\end{split}
\end{equation*}
where $obj=e^{\alpha T}v(0,\bm{x}_0)+\frac{\beta}{\alpha}(e^{\alpha T}-1)~\text{if}~ \alpha\neq 0$ and $obj=v(0,\bm{x}_0)+\beta T~\text{if}~\alpha=0$, and $\alpha$ is a user-specified value. The above semi-definite program is minimizing the objective via searching $(\beta;s_i(t,\bm{x})\in \sum[t,\bm{x}],i=0,\ldots,2;s_j(\bm{x})\in \sum[\bm{x}],j=3,\ldots,5;v(t,\bm{x}),p(t,\bm{x})\in \mathbb{R}[t,\bm{x}]$.

  The semidefinite program for solving \eqref{lower_bound_3II} is formulated below: 

  \begin{equation*}
\begin{split}
&\textbf{SDP6}~~~~~~~~~~~~~\min  obj\\
&\text{s.t.~}\begin{cases}
   \alpha v(\bm{x})+\beta-\mathcal{L}v(\bm{x})+s_0(\bm{x})h(\bm{x})\in \sum[\bm{x}],\\
        \alpha v(\bm{x})+\beta+ p(\bm{x})h(\bm{x})\in \sum[t,\bm{x}],\\
          v(\bm{x})+s_1(\bm{x})h(\bm{x})-s_2(\bm{x})g(\bm{x})\in \sum[\bm{x}], \\
         v(\bm{x})-1+s_3(\bm{x})g(\bm{x})\in \sum[\bm{x}], 
\end{cases}
\end{split}
\end{equation*}
where $obj=e^{\alpha T}v(\bm{x}_0)+\frac{\beta}{\alpha}(e^{\alpha T}-1)~\text{if}~ \alpha\neq 0$ and $obj=v(\bm{x}_0)+\beta T~\text{if}~\alpha=0$, and $\alpha$ is a user-specified value. The above semi-definite program is minimizing the objective via searching $(\beta;s_i(\bm{x})\in \sum[\bm{x}],j=0,\ldots,3;v(\bm{x}),p(\bm{x})\in \mathbb{R}[t,\bm{x}]$. 

  The semidefinite program for solving \eqref{upper_bound_2II} is formulated below: 
 \begin{equation*}
\begin{split}
&\textbf{SDP7}~~~~~~~~~~~~~\min  obj\\
&\text{s.t.~}\begin{cases}
   -\alpha v(t,\bm{x})-\beta+\mathcal{L}v(t,\bm{x})+s_0(t,\bm{x})h(\bm{x})+s_1(t,\bm{x})t(t-T)\in \sum[t,\bm{x}],\\
        -\alpha v(t,\bm{x})-\beta+\frac{\partial v(t,\bm{x})}{\partial t}+ p(t,\bm{x})h(\bm{x}) +s_2(t,\bm{x})t(t-T) \in \sum[t,\bm{x}],\\
          -v(T,\bm{x})+s_3(\bm{x})h(\bm{x})-s_4(\bm{x})g(\bm{x})\in \sum[\bm{x}], \\
         -v(T,\bm{x})+1+s_5(\bm{x})g(\bm{x})\in \sum[\bm{x}], 
\end{cases}
\end{split}
\end{equation*}
where $obj=-(e^{\alpha T}v(0,\bm{x}_0)+\frac{\beta}{\alpha}(e^{\alpha T}-1))~\text{if}~ \alpha\neq 0$ and $obj=-(v(0,\bm{x}_0)+\beta T)~\text{if}~\alpha=0$, and $\alpha$ is a user-specified value. The above semi-definite program is minimizing the objective via searching $(\beta;s_i(t,\bm{x})\in \sum[t,\bm{x}],i=0,\ldots,2;s_j(\bm{x})\in \sum[\bm{x}],j=3,\ldots,5;v(t,\bm{x}),p(t,\bm{x})\in \mathbb{R}[t,\bm{x}]$.

\subsection{Connections to Deterministic Reach-avoid Problems}
\label{app:deter}

This appendix explores the connections between the stochastic barrier functions proposed in the main text and their counterparts in deterministic systems. When the diffusion term $\boldsymbol{\sigma}(\cdot)$ is identically zero, the stochastic system (1) reduces to an ordinary differential equation. The following remarks detail how the barrier conditions simplify and can be used to provide guarantees for deterministic reach-avoid problems.

 \begin{remark}[Deterministic Reach-Avoid via Stochastic Barrier Functions I]
\label{re3}
The barrier function for certifying reach-avoid analysis over the time horizon $[0,T]$ (i.e., reach the set $\mathcal{X}_r$ at some time instant $t\in [0,T]$ while staying within the set $\mathcal{X}$ before $t$) for deterministic systems can be retrieved from the one satisfying \eqref{upper_bound_1}. 

When $\bm{\sigma}(\,\cdot\,)\equiv \bm{0}$, stochastic system \eqref{SDE} degenerates into a deterministic system 
\begin{equation}
    \label{ode}
    \frac{d \bm{x}}{d t}=\bm{b}(\bm{x}),
\end{equation}
where $\bm{X}_{\bm{x}_0}(\,\cdot\,)\colon [0,T^{\bm{x}_0})\rightarrow \mathbb{R}^n$ is the solution  with the initial state $\bm{x}_0$ at $t=0$. 

If there exist a barrier function $v(t,\bm{x})\colon [0,T]\times \mathbb{R}^n\rightarrow \mathbb{R}$ and a function $w(t,\bm{x})\colon [0,T]\times \mathbb{R}^n \rightarrow \mathbb{R}$ with $\sup_{(t,\bm{x})\in [0,T]\times \overline{\mathcal{X}}}|w(t,\bm{x})|\leq M$ that are continuously differentiable over $t$ and $\bm{x}$, satisfying condition \eqref{upper_bound_1}, then the system described by equation \eqref{ode}, when initiated from the set $\{\,\bm{x}\in \mathcal{X}\setminus \mathcal{X}_s\mid v(0,\bm{x})>\frac{2M}{T}\,\}$, will reach the set $\mathcal{X}_s$ within the time interval $[0,T]$, while remaining within the set $\mathcal{X}$ before the first encounter with $\mathcal{X}_s$. A proof of this conclusion is shown here: we first demonstrate that if system \eqref{ode} is initiated from the set $\{\,\bm{x}\in \mathcal{X}\setminus \mathcal{X}_s\mid v(0,\bm{x})>\frac{2M}{T}\,\}$, within the time interval $[0,T]$, it remains within the set $\mathcal{X}$ if it does not reach the set $\mathcal{X}_s$. Assume it is not true. Then there exists $\tau\in [0,T]$ such that $\bm{X}_{\bm{x}_0}(\tau) \in \partial \mathcal{X}$ and $\bm{X}_{\bm{x}_0}(t) \in \mathcal{X}\setminus \mathcal{X}_s$ for $t\in [0,\tau)$. Since $\mathcal{L}v(t,\bm{x})\geq 0$ for $t\in [0,T]$ and $\bm{x}\in \mathcal{X}\setminus \mathcal{X}_s$, we have \[v(\tau,\bm{X}_{\bm{x}_0}(\tau))\geq v(0,\bm{x}_0)>\frac{2M}{T}.\] Also, according to $v(t,\bm{x})\leq \mathcal{L}w(t,\bm{x})$ for $(t,\bm{x})\in [0,T]\times \mathcal{X}\setminus \mathcal{X}_s$, we have \[\frac{2M}{T} \tau< \int_{0}^{\tau} v(t,\bm{x})dt\leq w(\tau,\bm{X}_{\bm{x}_0}(\tau))-w(0,\bm{x}_0)\] and thus $w(\tau,\bm{X}_{\bm{x}_0}(\tau))> \frac{2M}{T}\tau + w(0,\bm{x}_0)$. Further, since $\frac{\partial v(t,\bm{x})}{\partial t}\geq 0$ and $v(t,\bm{x})\leq \frac{\partial w(t,\bm{x})}{\partial t}$ for $(t,\bm{x})\in [0,T]\times \partial \mathcal{X}$, we have $v(\tau,\bm{X}_{\bm{x}_0}(\tau))(T-\tau)\leq \int_{\tau}^{T}v(t,\bm{X}_{\bm{x}_0}(\tau))dt \leq \int_{\tau}^{T}\frac{\partial w(t,\bm{X}_{\bm{x}_0}(\tau))}{\partial t}dt=w(T,\bm{X}_{\bm{x}_0}(\tau))-w(\tau,\bm{X}_{\bm{x}_0}(\tau))$. Thus, 
\begin{equation}
\begin{split}
\frac{2M}{T}(T-\tau)&<v(\tau,\bm{X}_{\bm{x}_0}(\tau))(T-\tau)\\
&\leq w(T,\bm{X}_{\bm{x}_0}(\tau))-\frac{2M}{T}\tau - w(0,\bm{x}_0)
\end{split}
\end{equation}
holds and consequently, $2M<w(T,\bm{X}_{\bm{x}_0}(\tau))- w(0,\bm{x}_0) \leq 2M$, which is a contradiction. Therefore, the system \eqref{ode}, when initiated from the set $\{\,\bm{x}\in \mathcal{X}\setminus \mathcal{X}_s\mid v(0,\bm{x})>\frac{2M}{T}\,\}$, will not reach the set $\partial \mathcal{X}$ within the time interval $[0,T]$ if it does not enter the set $\mathcal{X}_s$. Next, we show that the system \eqref{ode}, when initiated from the set $\{\,\bm{x}\in \mathcal{X}\setminus \mathcal{X}_s\mid v(0,\bm{x})>\frac{2M}{T}\,\}$, will reach the set $\mathcal{X}_s$ within the time interval $[0,T]$.  Assume that this conclusion does not hold, which indicates that the system when initiated from the set $\{\,\bm{x}\in \mathcal{X}\setminus \mathcal{X}_s\mid v(0,\bm{x})>\frac{2M}{T}\,\}$, will stay within the set $\mathcal{X}\setminus \mathcal{X}_s$ over the time horizon $[0,T]$. From $v(t,\bm{x})\leq \mathcal{L}w(t,\bm{x})$ for $(t,\bm{x})\in [0,T]\times \mathcal{X}\setminus \mathcal{X}_s$, we can obtain $\frac{2M}{T} T< \int_{0}^T v(t,\bm{X}_{\bm{x}_0}(t))dt \leq w(T,\bm{X}_{\bm{x}_0}(T))- w(0,\bm{x}_0)\leq 2M$, which is a contradiction. Therefore, the system \eqref{ode}, when initiated from the set $\{\,\bm{x}\in \mathcal{X}\setminus \mathcal{X}_s\mid v(0,\bm{x})>\frac{2M}{T}\,\}$, will reach the set $\mathcal{X}_s$ within the time interval $[0,T]$ and finally the conclusion holds. 

\end{remark}

\begin{remark}[Deterministic Reach-Avoid via Stochastic Barrier Functions II]
\label{re4}
If there exist a barrier function $v(t,\bm{x})\colon [0,T]\times \mathbb{R}^n\rightarrow \mathbb{R}$ and a function $w(t,\bm{x})\colon [0,T]\times \mathbb{R}^n \rightarrow \mathbb{R}$ with $\sup_{(t,\bm{x})\in [0,T]\times \overline{\mathcal{X}}}|w(t,\bm{x})|\leq M$ that are continuously differentiable over $t$ and $\bm{x}$, satisfying condition \eqref{upper_bound_2}, then the system \eqref{ode}, when initiated from the set $\{\,\bm{x}\in \mathcal{X}\setminus \mathcal{X}_s\mid \frac{(\frac{1}{\alpha}v(0,\bm{x})+\frac{\beta}{\alpha^2})(e^{\alpha T}-1)-\frac{\beta}{\alpha}T}{T} -\frac{2M}{T}>0\,\}$ (if $\alpha\neq 0$) or 
  $\{\,\bm{x}\in \mathcal{X}\setminus \mathcal{X}_s\mid v(0,\bm{x})+\frac{1}{2}\beta T-\frac{2M}{T}>0\,\}$ (if $\alpha=0$), will reach the set $\mathcal{X}_s$ within the time interval $[0,T]$, while remaining within the set $\mathcal{X}$ before the first encounter with $\mathcal{X}_s$.
\end{remark}

\begin{remark}[Deterministic Reach-Avoid via Stochastic Barrier Functions III]
Similar to the time-dependent case, the barrier condition \eqref{upper_bound_3} also provides a guarantee for deterministic systems ($\boldsymbol{\sigma}(\cdot) \equiv \boldsymbol{0}$). Specifically, the system \eqref{ode}, when initiated from a state where the lower bound in Corollary \eqref{time-in-H} is positive, will reach the set $\mathcal{X}_s$ within the time interval $[0,T]$ while remaining within the set $\mathcal{X}$ until the first encounter with $\mathcal{X}_s$.
\end{remark}

\begin{remark}[Deterministic Reach at Time $T$ via Stochastic Barrier Functions I]
%The barrier function for certifying reach-avoid analysis at the time instant $T$ (i.e., reach the set $\mathcal{X}_r$ at the time instant $t=T$ while staying within the set $\mathcal{X}$ before $T$) for deterministic systems can be retrieved from the one satisfying \eqref{upper_bound_1II}. 

If there exists a barrier function $v(t,\bm{x})\colon [0,T]\times \mathbb{R}^n\rightarrow \mathbb{R}$ that is continuously differentiable over $t$ and $\bm{x}$, satisfying condition \eqref{upper_bound_1II}, then the system described by equation \eqref{ode}, when initiated from the set $\{\,\bm{x}\in \mathcal{X}\setminus \mathcal{X}_s\mid v(0,\bm{x})>0\,\}$, will reach the set $\mathcal{X}_s$ at $t=T$, while remaining within the set $\mathcal{X}$ before $t=T$. This problem has been studied in \cite{xue2019inner}. A proof of this conclusion is shown here: we first demonstrate that if system \eqref{ode} is initiated from the set $\{\,\bm{x}\in \mathcal{X}\setminus \mathcal{X}_s\mid v(0,\bm{x})>0\,\}$, it will stay within the set $\mathcal{X}$ over the time horizon $[0,T]$. Assume this is not true. Then there exists $\tau\in [0,T]$ such that $\bm{X}_{\bm{x}_0}(\tau) \in \partial \mathcal{X}$ and $\bm{X}_{\bm{x}_0}(t) \in \mathcal{X}$ for $t\in [0,\tau)$. Since $\mathcal{L}v(t,\bm{x})\geq 0$ for $t\in [0,T]$ and $\bm{x}\in \mathcal{X}$, we have $v(\tau,\bm{X}_{\bm{x}_0}(\tau))\geq v(0,\bm{x}_0)>0$. Further, since $\frac{\partial v(t,\bm{x})}{\partial t}\geq 0$ for $(t,\bm{x})\in [0,T]\times \partial \mathcal{X}$, we have \[v(T,\bm{X}_{\bm{x}_0}(\tau))\geq v(\tau,\bm{X}_{\bm{x}_0}(\tau)) > 0,\] which contradicts $v(T,\bm{x})\leq 1_{\mathcal{X}_s}(\bm{x}), \forall \bm{x}\in \overline{\mathcal{X}}$. Therefore, the system \eqref{ode}, when initiated from the set $\{\,\bm{x}\in \mathcal{X}_s\mid v(0,\bm{x})>0\,\}$, stays within the set $\mathcal{X}$ over the time horizon $[0,T]$. Next, we show that the system \eqref{ode}, when initiated from the set $\{\,\bm{x}\in \mathcal{X}\setminus \mathcal{X}_s\mid v(0,\bm{x})>0\,\}$, will reach the set $\mathcal{X}_s$ at $t=T$.  Assume that this conclusion does not hold, which indicates that the system, when initiated from the set $\{\,\bm{x}\in \mathcal{X}\setminus \mathcal{X}_s\mid v(0,\bm{x})>0\,\}$, will stay within the set $\mathcal{X}\setminus \mathcal{X}_s$ over the time horizon $[0,T]$. From $\mathcal{L}v(t,\bm{x})\geq 0$ for $(t,\bm{x})\in [0,T]\times \mathcal{X}\setminus \mathcal{X}_s$, we can obtain $v(T,\bm{X}_{\bm{x}_0}(T))>0$, which contradicts $v(T,\bm{x})\leq 1_{\mathcal{X}_s}(\bm{x}), \forall\bm{x}\in \overline{\mathcal{X}}$. Therefore, the system \eqref{ode}, when initiated from the set $\{\,\bm{x}\in \mathcal{X}_s\mid v(0,\bm{x})>0\,\}$, will reach the set $\mathcal{X}_s$ at $t=T$ and finally the conclusion holds. 
\end{remark}

\begin{remark}[Deterministic Reach at Time $T$ via Stochastic Barrier Functions II]
\label{re7}
If there exists a barrier function $v(t,\bm{x})\colon [0,T]\times \mathbb{R}^n\rightarrow \mathbb{R}$ that is continuously differentiable over $t$ and $\bm{x}$, satisfying condition \eqref{upper_bound_2II}, then the system \eqref{ode}, when initiated from the set $\{\,\bm{x}\in \mathcal{X}\setminus \mathcal{X}_s\mid e^{\alpha T}v(0,\bm{x})+\frac{\beta}{\alpha}(e^{\alpha T}-1)>0\,\}$ (if $\alpha\neq 0$) or 
  $\{\,\bm{x}\in \mathcal{X}\setminus \mathcal{X}_s\mid v(0,\bm{x})+\beta T>0\,\}$ (if $\alpha=0$), will reach the set $\mathcal{X}_s$ at $t=T$, while remaining within the set $\mathcal{X}$ before the time instant $T$.
\end{remark}

\subsection{Explanation on the Ineffectiveness of Condition~\eqref{upper_bound_3II}}
\label{section:expla}
Condition \eqref{upper_bound_3II} is useless in determining lower bounds of the probability $\mathbb{P}_{\bm{x}_0}^{T}$. We give a brief explanation here. Firstly, we observe that $\alpha$ should not be zero. If $\alpha=0$,  we have $\mathbb{P}_{\bm{x}_0}^{T}\geq v(\bm{x})+\beta T$ for $\bm{x}\in \mathcal{X}\setminus \mathcal{X}_s$. Further, $0\geq \alpha v(\bm{x})+\beta, \forall \bm{x}\in \partial \mathcal{X}$ implies $\beta\leq 0$. Moreover, since $v(\bm{x})\leq 0$ over $\bm{x}\in \overline{\mathcal{X}}\setminus \mathcal{X}_s$, we conclude that 
$v(\bm{x})+\beta T\leq 0$ is a useless lower bound. Secondly, if $\alpha<0$, since $v(\bm{x})\leq 0$ over $\bm{x}\in \overline{\mathcal{X}}\setminus \mathcal{X}_s$ and $0\geq \alpha v(\bm{x})+\beta, \forall \bm{x}\in \partial \mathcal{X}$, we have $\beta\leq 0$. Thus, $e^{\alpha T}v(\bm{x})+\frac{\beta}{\alpha}(e^{\alpha T}-1)\leq 0$ is a useless lower bound of $\mathbb{P}_{\bm{x}}^{T}$ for $\bm{x}\in \mathcal{X}\setminus \mathcal{X}_s$. Thirdly, if $\alpha>0$ and $\beta\leq 0$, we have $e^{\alpha T}v(\bm{x}_0)+\frac{\beta}{\alpha}(e^{\alpha T}-1)\leq 0$ is useless lower bound of $\mathbb{P}_{\bm{x}}^{T}$ for $\bm{x}\in \mathcal{X}\setminus \mathcal{X}_s$. Fourthly, if $\alpha>0$ and $\beta> 0$,  we have $e^{\alpha T}v(\bm{x}_0)+\frac{\beta}{\alpha}(e^{\alpha T}-1)= e^{\alpha T}(v(\bm{x}_0)+\frac{\beta}{\alpha})-\frac{\beta}{\alpha} \leq 0$ is useless lower bound of $\mathbb{P}_{\bm{x}}^{T}$ for $\bm{x}\in \{\mathcal{X}\setminus \mathcal{X}_s\mid v(\bm{x})+\frac{\beta}{\alpha}\leq 0\}$. Lastly, we show that $\bm{x}\in \{\mathcal{X}\setminus \mathcal{X}_s\mid v(\bm{x})+\frac{\beta}{\alpha}>0\}=\emptyset$, where $\alpha>0$ and $\beta> 0$. Assume there exists $\bm{x}_0\in \{\mathcal{X}\setminus \mathcal{X}_s\mid v(\bm{x})+\frac{\beta}{\alpha}> 0\}$, where $\alpha>0$ and $\beta> 0$. Following the proof of Theorem \ref{coro5II}, we can obtain 
\begin{equation*}
\begin{split}
\mathbb{P}(\widetilde{\bm{X}}_{\bm{x}_0}^{\bm{w}}(T)\in \mathcal{X}_s)&=\mathbb{E}[1_{\mathcal{X}_s}(\widetilde{\bm{X}}_{\bm{x}_0}^{\bm{w}}(T))]\geq  \mathbb{E}[v(\widetilde{\bm{X}}_{\bm{x}_0}^{\bm{w}}(T))]\\
&\geq e^{\alpha T}v(\bm{x}_0)+\frac{\beta}{\alpha}(e^{\alpha T}-1)=e^{\alpha T}(v(\bm{x}_0)+\frac{\beta}{\alpha})-\frac{\beta}{\alpha}.
  \end{split}
\end{equation*}
Also, since $\lim_{T\rightarrow +\infty}e^{\alpha T}(v(\bm{x}_0)+\frac{\beta}{\alpha})-\frac{\beta}{\alpha}=+\infty$,  $\lim_{T\rightarrow +\infty}\mathbb{E}[v(\widetilde{\bm{X}}_{\bm{x}_0}^{\bm{w}}(T))]=+\infty$ holds, which contradicts $v(\bm{x})\leq 1_{\mathcal{X}_s}(\bm{x}), \forall \bm{x}\in \overline{\mathcal{X}}$. Therefore, $\bm{x}\in \{\mathcal{X}\setminus \mathcal{X}_s\mid v(\bm{x})+\frac{\beta}{\alpha}>0\}=\emptyset$, where $\alpha>0$ and $\beta> 0$. In summary, condition \eqref{upper_bound_3II} is useless in determining lower bounds of the probability $\mathbb{P}_{\bm{x}_0}^{T}$.
\end{document}